\keywords{Discounted Payoff Games, Payoff Games, Objective Improvement, Strategy Improvement}
\def\figurecaption#1#2{\noindent\hangindent 40pt
                       \hbox to 36pt {\small\sl #1 \hfil}
                       \ignorespaces {\small #2}}
\long\def\@makecaption#1#2{
  \vskip 10pt 
  \settowidth{\@tempdima}{#2}
  \ifdim\@tempdima>0pt
       \setbox\@tempboxa\hbox{#1: #2}
     \else
       \setbox\@tempboxa\hbox{#1 #2}
   \fi
   \ifdim \wd\@tempboxa >\hsize               
       \begin{list}{#1:}{
       \settowidth{\labelwidth}{#1:}
       \setlength{\leftmargin}{\labelwidth}
       \addtolength{\leftmargin}{\labelsep}
        }\item #2 \end{list}\par   
     \else                                    
       \hbox to\hsize{\hfil\box\@tempboxa\hfil}  
   \fi}
\newcommand{\off}{\mathsf{offset}}
\newcommand\val{\mathsf{val}}
\newcommand\out{\mathsf{out}}
\newcommand{\denom}{\mathsf{denom}}
\theoremstyle{plain}
\def\eg{{\em e.g.}}
\def\ie{{\em i.e.}}
\begin{document}
\tracingoutput=1
\title[Objective Improvement for Solving Discounted Payoff Games]{An Objective Improvement Approach \texorpdfstring{\\}{} to Solving Discounted Payoff Games\rsuper*}
\titlecomment{{\lsuper*}This work is based on~\cite{DDS23}, which appeared in GandALF'23.}

\author[D.~Dell'Erba]{Daniele Dell'Erba\lmcsorcid{0000-0003-1196-6110}}[a]
\author[A.~Dumas]{Arthur Dumas}[b]
\author[S.~Schewe]{Sven Schewe\lmcsorcid{0000-0002-9093-9518}}[c]
\address{Middlesex University, London, United Kingdom}
\email{d.dellerba@mdx.ac.uk}

\address{ENS Rennes, France}
\email{arthur.dumas@ens.rennes.fr}

\address{University of Liverpool, United Kingdom}
\email{sven.schewe@liverpool.ac.uk}

\begin{abstract}
While discounted payoff games and classic games that reduce to them, like parity and mean-payoff games, are symmetric, their solutions are not.
We have taken a fresh view on the properties that optimal solutions need to have, and devised a novel way to converge to them, which is entirely symmetric.
We achieve this by building a constraint system that uses every edge to define an inequation, and update the objective function by taking a single outgoing edge for each vertex into account.
These edges loosely represent strategies of both players, where the objective function intuitively asks to make the inequation to these edges sharp. In fact, where they are not sharp, there is an `error' represented by the difference between the two sides of the inequation, which is 0 where the inequation is sharp. Hence, the objective is to minimise the sum of these errors.
For co-optimal strategies, and only for them, it can be achieved that all selected inequations are sharp or, equivalently, that the sum of these errors is zero. While no co-optimal strategies have been found, we step-wise improve the error by improving the solution for a given objective function or by improving the objective function for a given solution. 
This also challenges the gospel that methods for solving payoff games are either based on strategy improvement or on value iteration.
\end{abstract}

\maketitle

\section{Introduction}
\label{sec:introduction}

We study turn-based zero sum games played between two players on directed graphs.
The two players take turns to move a token along the
vertices of a finite labelled graph with the goal to optimise their adversarial
objectives.

Various classes of graph games are characterised by the objective of the
players, for instance, in \emph{parity games} the objective is to optimise the parity
of the dominating colour occurring infinitely often, while in \emph{discounted and
  mean-payoff games} the objective of the players is to minimise resp.\ maximise the discounted and limit-average sum of
the colours.  

Solving graph games is the central and most expensive step
in many model
checking~\cite{Koz83,EJS93,Wil01,AHM01,AHK02,SF06},
satisfiability
checking~\cite{Koz83,Var98,Wil01},
and synthesis~\cite{Pit06,SF06a}
algorithms.
Progress in algorithms for solving graph games will, therefore, allow for the
development of more efficient model checkers and contribute to bringing synthesis
techniques to practice.

There is a hierarchy among the graph games mentioned earlier, with simple and well-known reductions from parity games to mean payoff games, from mean-payoff games to discounted payoff games, and from discounted payoff games to simple stochastic games such as the ones from \cite{ZP96}, while no reductions are known in the other direction. Therefore, one can solve instances of all these games by using an algorithm for stochastic games.
All of these games are in \textsf{UP} and \textsf{co-UP} \cite{Jur98}, while no tractable algorithm is known.

Most research has focused on parity games: as the most special class of games, algorithms have the option to use the special structure of their problems, and they are most directly linked to the synthesis and verification problems mentioned earlier.
Parity games have thus enjoyed a special status among graph games and the quest for efficient
algorithms \cite{EL86,EJ91,McN93,ZP96,BCJLM97,Zie98,Obd03,BDM16b,BDM18,BDM18a}  
to solve them has been an active field of research during the last
decades, which has received further boost with the arrival of quasi-polynomial techniques \cite{JL17,FJKSSW19,LB20,LPSW22,DS22,CJKLS22,BDMSW24}.

Interestingly, the only class of efficient techniques for solving parity games that does not (yet) have a quasi-polynomial approach is strategy improvement
algorithms~\cite{Lud95,Pur95,VJ00,BV07,Sch08a,Fea10a,STV15},
a class of algorithms closely related to the Simplex for linear programming, known to perform well in practice. 
Most of these algorithms reduce to mean~\cite{BV07,Sch08a,STV15,BDM20}
or discounted~\cite{Lud95,Pur95,FGO20,Koz21} payoff games.

With the exception of the case in which the fixed-point of discounted payoff games is explicitly computed~\cite{ZP96}, all these algorithms share a disappointing feature: they are inherently non-symmetric approaches for solving an inherently symmetric problem.
However, some of these approaches have a degree of symmetry. Recursive approaches treat even and odd colours symmetrically, one at a time, but they treat the two players very differently for a given colour.
Symmetric strategy improvement \cite{STV15} runs a strategy improvement algorithms for both players in parallel, using the intermediate results of each of them to inform the updates of the other, but at heart, these are still two intertwined strategy improvement algorithms that, individually, are not symmetric.
This is due to the fact that applying strategy improvement itself symmetrically can lead to cycles~\cite{Con93}.

The key contribution of this paper is to devise a new class of algorithms to solve discounted payoff games, which is entirely symmetric.
Like strategy improvement algorithms, it seeks to find co-optimal strategies and improves strategies while they are not optimal.
However, in order to do so, it does not distinguish between the strategies of the two players.
This seems surprising, as maximising and minimising appear to pull in opposing directions.

Similarly to strategy improvement approaches, the new objective improvement approach turns the edges of a game into constraints (here called inequations) and minimises an objective function.
However, while strategy improvement algorithms take only the edges in the strategy of one player (and all edges of the other player) into account and then finds the optimal response by solving the resulting one-player game, objective improvement always takes all edges into account.
The strategies under consideration then form a subset of the inequations, and the goal would be to make them \emph{sharp} (i.e., strict, tight, satisfied as equations), which only works when both strategies are optimal.
When they are not, then there is some \emph{offset} for each of the inequations, and the objective is to reduce this offset in every improvement step.
This treats the strategies of both players completely symmetrically.

\subsection*{Organisation of the Paper}
The paper is organised as follows.
After the preliminaries in Section~\ref{sec:prelims},
we start by outlining our method and using a simple game to explain it in Section~\ref{sec:outline}.
We then formally introduce our objective improvement algorithm in Section~\ref{sec:general}, keeping the question of how to choose better strategies abstract.
Section~\ref{sec:choose} discusses how to find better strategies and Section~\ref{sec:exp} provides an experimental evaluation of the algorithm.
We finally wrap up with a discussion of our results in Section~\ref{sec:discuss}.

\section{Preliminaries}\label{sec:prelims}

A \emph{discounted payoff game} (DPG) is a tuple $\mathcal{G} = (V_{\min}, V_{\max}, E, w, \lambda)$, where $V= V_{\min} \cup V_{\max}$ are the vertices of the game, partitioned into two disjoint sets $V_{\min}$ and $V_{\max}$, such that the pair $(V, E)$ is a finite directed
graph without sinks.
The vertices in $V_{\max}$ (\emph{resp}, $V_{\min}$) are controlled by Player Max or maximiser (\emph{resp}, Player Min or minimiser) and $E \subseteq V \times V$ is the edge relation. Every edge has a weight represented by the function $w : E \to \mathbb{R}$, and a \emph{discount factor} represented by the function $\lambda : E \to [0,1)$. When the discount factor is uniform, i.e., the same for every edge, it is represented by a constant value $\lambda \in [0,1)$.
For ease of notation, we write $w_e$ and $\lambda_e$ instead of $w(e)$ and $\lambda(e)$.
A \emph{play} on $\mathcal{G}$ from a vertex $v$ is an infinite path, which can be represented as a sequence of edges $\rho=e_0 e_1 e_2 \ldots$ such that, for every $i \in \mathbb{N}$, $e_i=(v_i,v_{i+1})\in E$ and $v_0=v$.
By $\rho_i$ we refer to the $i$-th edge of the play.
The \emph{outcome} of a discounted game $\mathcal{G} = (V_{\min}, V_{\max}, E, w, \lambda)$ for a play $\rho$ is $\out(\rho)=\sum_{i=0}^{\infty} w_{e_i} \prod_{j=0}^{i-1}\lambda_{e_j}$.
For games with a constant discount factor, this simplifies into $\out(\rho)=\sum_{i=0}^{\infty} w_{e_i} \lambda^i$.

A \emph{positional strategy} for Max is a function $\sigma_{\max}:V_{\max}\to V$ that maps each Max vertex to a vertex according to the set of edges, i.e., $(v,\sigma_{\max}(v))\in E$.
Positional Min strategies are defined accordingly, and we call the set of positional Min and Max strategies $\Sigma_{\min}$ and $\Sigma_{\max}$, respectively.

A pair of strategies $\sigma_{\min}$ and $\sigma_{\max}$, one for each player Min and Max, defines a unique play $\rho(v,\sigma_{\min},\sigma_{\max})$ from each vertex $v \in V$.
Discounted payoff games are positionally determined~\cite{ZP96}:
$$
\adjustlimits\max_{\sigma_{\max} \in \Sigma_{\max}} \min_{\sigma_{\min} \in \Sigma_{\min}} \out(\rho(v,\sigma_{\min},\sigma_{\max}))
=
\adjustlimits\min_{\sigma_{\min} \in \Sigma_{\min}} \max_{\sigma_{\max} \in \Sigma_{\max}} \out(\rho(v,\sigma_{\min},\sigma_{\max})) 
$$
holds for all $v\in V$, and neither the strategy, nor the value computed from the strategy, changes when we allow more powerful classes of strategies that allow for using memory and/or randomisation for one or both players.

The \emph{value} of a minimiser strategy $\sigma$, denoted by $\mathsf{minival}_{\sigma}: V \rightarrow \mathbb{R}$, is defined as
$$ \mathsf{minival}_{\sigma}: v \mapsto  \max_{\sigma_{\max} \in \Sigma_{\max}}  \out(\rho(v,\sigma,\sigma_{\max}))\ ,$$
The \emph{value} of a maximiser strategy $\sigma$, denoted by $\mathsf{maxival}_{\sigma}: V \rightarrow \mathbb{R}$, is defined as
$$ \mathsf{maxival}_{\sigma}: v \mapsto  \min_{\sigma_{\min} \in \Sigma_{\min}}  \out(\rho(v,\sigma_{\min},\sigma))\ .$$

The resulting \emph{value of $\mathcal G$}, denoted by $\val_{\mathcal G}: V \rightarrow \mathbb{R}$, is defined as
$$ \val_{\mathcal G}: v \mapsto  \max_{\sigma_{\max} \in \Sigma_{\max}} \min_{\sigma_{\min} \in \Sigma_{\min}} \out(\rho(v,\sigma_{\min},\sigma_{\max}))\ .$$

We have $$\val_G(v) = \max_{\sigma_{\max} \in \Sigma_{\max}} \mathsf{maxival}_{\sigma_{\max}}(v) = \min_{\sigma_{\min} \in \Sigma_{\min}} \mathsf{minival}_{\sigma_{\min}} (v)$$
for all $v \in V$. A positional maximiser (resp.\ minimiser) strategy $\sigma$ is said to be \emph{optimal} if, and only if, $\val_{\mathcal G}(v) = w_{(v,\sigma(v))} + \lambda_{(v,\sigma(v))} \val_{\mathcal G}(\sigma(v))$ holds for all maximiser (resp.\ minimiser) vertices.
Such optimal strategies always exist.

The valuation $\val_{\mathcal G}$ is the sole valuation that satisfies the fixed point equations $$\val_{\mathcal G}(v) = \max_{(v,v') \in E} w_{(v,v')} + \lambda_{(v,v')} \val_{\mathcal G}(v')$$ for all maximiser vertices $v \in V_{\max}$ and 
 $$\val_{\mathcal G}(v) = \min_{(v,v') \in E} w_{(v,v')} + \lambda_{(v,v')} \val_{\mathcal G}(v')$$ for all minimiser vertices $v \in V_{\min}$.

Likewise, we define the value of a pair of strategies $\sigma_{\min}$ and $\sigma_{\max}$, denoted $\val_{\sigma_{\min},\sigma_{\max}}: V \rightarrow \mathbb{R}$, as
$$ \val_{\sigma_{\min},\sigma_{\max}}: v \mapsto  \out(\rho(v,\sigma_{\min},\sigma_{\max}))\ .$$

As we treat both players symmetrically in this paper, we define a \emph{pair of strategies} $\sigma : V \mapsto V$ whose restriction to $V_{\min}$ and $V_{\max}$ are a minimiser strategy $\sigma_{\min}$ and a maximiser strategy $\sigma_{\max}$, respectively.
We then write $\rho(v,\sigma)$ instead of $\rho(v,\sigma_{\min},\sigma_{\max})$ and $\val_{\sigma}$ instead of $\val_{\sigma_{\min},\sigma_{\max}}$. To ease the reading, when the strategy is clear from the context, we shall drop the subscript.

If both of these strategies are optimal, we call $\sigma$ a joint \emph{co-optimal} strategy.
This is the case if, and only if, $\val_{\mathcal G}= \val_{\sigma}$ holds.

We define a \emph{solution} to a discounted payoff game $\mathcal G$ to be a valuation $\val$ for the vertices such that, for every edge $e = (v,v')$, it holds that\footnote{These are the constraints represented in $H$ in Section~\ref{sec:general}.}
\begin{itemize}
    \item $\val(v) \leq w_e + \lambda_e \val(v')$ if $v$ is a minimiser vertex and
    \item $\val(v) \geq w_e + \lambda_e \val(v')$ if $v$ is a maximiser vertex.
\end{itemize}
For co-optimal strategies $\sigma$, $\val_{\sigma}$ that is equal to $\val_{\mathcal G}$ is clearly a solution, as all of these inequations are satisfied, while for a pair of strategies $\sigma$ that is not co-optimal, $\val_{\sigma}$ is not a solution%
\footnote{For every joint strategy $\sigma$ of both players and every vertex $v\in V$, we have $\val_{\sigma}(v) = w_{(v,\sigma(v))} + \lambda_{(v,\sigma(v))} \val_{\sigma}(\sigma(v))$ by definition.
Thus, if $\val_{\sigma}$ is a solution, then $\val_{\mathcal G}$ also satisfies the fixed point equations that define $\val_{\mathcal G}$, and we have that $\val_\sigma=\val_{\mathcal G}$ holds and that $\sigma$ is co-optimal.
}.

Note that we are interested in the \emph{value} of each vertex, not merely if the value is greater or equal than a given threshold value.

\subsection{Simplex method}

A common way to compute $\val_{\mathcal G}$ is to create a sequence of linear programming instances from $\mathcal G$. In this paragraph, we provide the background for solving a linear programming instance and outline the Simplex method that is the standard algorithm used to find a solution. 

A linear programming problem requires to find a (solution) vector $x\in\mathbb{R}^{n}$ that minimises (or maximises) $c\cdot x$, where $c\in \mathbb{R}^{n}$ is a vector of constants, such that $A\cdot x \leq b$ (or $\geq b$), where $A\in \mathbb{R}^{m\times n}$ is a coefficient matrix and $b\in \mathbb{R}^{m}$ is a vector of constants. 
Given a game $\mathcal G$, we can iteratively define an objective function represented by the vector c, the input matrix $A$, and the vector $b$, such that, in the sequence of linear programming instances, the solution $x$ converges to $\val_{\mathcal G}$. In this translation from $\mathcal G$ to the linear programming instance, the matrix $A$ is composed of $m$ rows and $n$ columns where $m$ is the number of edges and $n$ the number of vertices in $\mathcal G$. Since the constrains in $A$ represent the edges, that are binary relations, only two columns in each row contain a nonzero value (with the exception of self-loops that have only one nonzero coefficient). The vector $b$ contains the weight of the edges. Hence, every row $i$ can be expressed as $a_{ij}\cdot x_j+a_{ik}\cdot x_k\geq b_i$, if $e=(j,k)$ is an edge of $\mathcal G$ and $j$ is a maximiser vertex (otherwise the constraint is $\leq b_i$), with $a_{ij}=1$, $a_{ik}=-\lambda_e$, and $b_i=w_e$. Finally, $c$ is the unit vector with a positive sign for the maximiser vertices and negative sign for the minimiser, and the objective function maximises the sum of the values of $x$.

As we use no details of any implementation of a simplex algorithm, we only rehash the principle way it operates here.
From a graphical point of view, the set of constraints, i.e., the rows of $A$ and $b$, define a region called \emph{polytope} (or: simplex; hence the name of the approach) in the space $\mathbb{R}^{n}$. Any point that lies within the polytope (including its surface) is a solution.

Broadly speaking, the simplex algorithm traverses the corner of the polytope so that the value the linear objective function takes for the corner is improving -- in our case decreasing as we minimise.
More precisely, it picks subsets of $n$ inequations that are made sharp:
when read as equations, they define a valuation in $\mathbb R^n$ that satisfies all $m$ inequations; they are bases and define the corners of the simplex.

For this, the simplex algorithm operates in two phases. In a first phase it finds \emph{any} basis that defines such a corner. We will not discuss this phase as it is not relevant for this article%
\footnote{It is also the case that, in all but the first call to a linear program, we already have a basis to start from.}.

In the second phase, the simplex algorithm essentially updates the basis to a \emph{neighbouring} one, meaning that only one inequation is removed from and one inequation is entered into the set of sharp inequations.
When making this pick, the simplex algorithm only allows for such updates that are (1) still bases (i.e.\ their rows in $A$ must be linearly independent to that they define a point in $\mathbb{R}^n$); (2) the point in $\mathbb{R}^n$ they define satisfies all $m$ inequations, and (3) the value of the objective function improves.

For non-optimal solutions (w.r.t.\ the objective function), this is always possible, though in some cases the improvement in (3) is not strict. Even in those cases, there is still a series of changes that all satisfy (1-3) to an optimal solution.
Those cases are called degenerate, that is, when the current solution satisfies more inequations than necessary and, although there exists a neighbouring solution leading to a strict improvement, picking it requires a basis change. While we do not discuss how they are treated in simplex algorithms, we point out that the problems and their solutions are similar to what we describe in Sections \ref{ssec:noLocal} and \ref{ssec:efficient}, respectively.

\section{Outline and Motivation Example}\label{sec:outline}

We start by considering the simple discounted payoff game of Figure~\ref{fig:example}, assuming that it has some uniform discount factor $\lambda \in [0,1)$.
In this game, the minimiser (who owns the right vertex, depicted as a square) has only one option: she always has to use the self-loop, which earns her an immediate reward of $1$.
The overall reward the minimiser reaps for a play that starts in her vertex is therefore 
 $1 + \lambda + \lambda^2 + \ldots = \frac{1}{1-\lambda}$.
 
The maximiser (who owns the left vertex, $b$, marked by a circle) can choose to either use the self-loop, or to move to the minimiser vertex (marked by a square), both yielding no immediate reward.

If the maximiser decides to stay forever in his vertex (using the self-loop), his overall reward in the play that starts at (and, due to his choice, always stays in) his vertex, is $0$.
If he decides to move on to the minimiser vertex the $n^{th}$-time, 
then the reward is $\frac{\lambda^n}{1-\lambda}$.

The optimal decision of the maximiser is, therefore, to move on the first time, which yields the maximal reward of $\frac{\lambda}{1-\lambda}$.
Every vertex $v$ has some outgoing edge(s) $e = (v,v')$ where $\val_{\mathcal G}(v) = w_e + \lambda_e \val_{\mathcal G}(v')$ holds~\cite{ZP96};
these edges correspond to the optimal decisions for the respective player.

For our running example game of Figure~\ref{fig:example} with a fixed discount factor $\lambda \in [0,1)$, the constraints that a solution must satisfy are:
\begin{enumerate}
    \item $\val(a) \leq 1 + \lambda \val(a)$ \hspace{0.15em} for the self-loop of the minimiser vertex;
    \item $\val(b) \geq \lambda \val(b)$ \hspace{2.1em} for the self-loop of the maximiser vertex; and
    \item $\val(b) \geq \lambda \val(a)$ \hspace{2em} for the transition from the maximiser to the minimiser vertex.
\end{enumerate}

\begin{wrapfigure}{r}{0.35\textwidth} 
\vspace{-21pt}
  \begin{center}
\begin{tikzpicture}
    [node distance = 5em, bend angle = 22.5, inner sep = 0.3em, minimum size = 3.2em]
    \tikzset{every loop/.style = {max distance = 1.5em}}
    \node[shape=circle,draw,thick] (max) at (0,0) {b};
    \node[shape=rectangle,draw,thick] (min) at (3,0) {a};

    \draw[thick,->] (max) to node [above] {$0$} (min);
    \draw [thick,->] (max) edge[loop above]node{$0$} (max);
    \draw [thick,->] (min) edge[loop above]node{$1$} (min);
    \end{tikzpicture}
\caption{A discounted Payoff Game. Maximiser vertices are depicted as a circle, minimizer ones as a square.}
\label{fig:example}
  \end{center}
\end{wrapfigure}
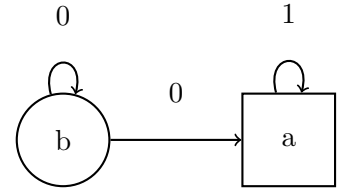
The unique solution that satisfies these inequations and produces a sharp inequation (i.e.\ satisfied as equation) for some outgoing edge of each vertex assigns $\val(a)=\frac{1}{1-\lambda}$ and $\val(b)=\frac{\lambda}{1-\lambda}$.
This solution also defines the optimal strategies of the players (to stay for the minimiser, and to move on for the maximiser).

Solving a discounted payoff game means finding this solution and/or these strategies.

We discuss a symmetric approach to find this unique valuation.
Our approach adjusts linear programming in a natural way that treats both players symmetrically:
we maintain the set of inequations for the complete time, while approximating the goal of ``one equality per vertex'' by the objective function.
To do that, we initially fix an \emph{arbitrary} outgoing edge for every vertex (a strategy), and minimise the sum of the distances between the left and right sides of the inequations defined by these edges, which we call the \emph{offset} of this edge. This means, for an edge $e = (v,v')$, to minimise the difference between $\val(v)$ (left side of the inequation) and $w_e + \lambda_e \val(v')$ (right side).

To make this clear, we consider again the example of Figure~\ref{fig:example} and use both self-loops as the strategies for the players fixed at the beginning in our running example.
The offset for the selected outgoing edge of the minimiser vertex $a$ is equal to $1 -(1-\lambda)\val(a)$, while the offset for the selected outgoing edge of the maximiser vertex $b$ is equal to $(1-\lambda)\val(b)$.
The resulting overall objective consists, therefore, in minimising the value $1-(1-\lambda)\val(a) + (1-\lambda)\val(b)$.

This term is always non-negative, since it corresponds to the sum of the edges' contributions that are all non-negative.
Moreover, when only optimal strategies are selected to form this objective function, the value $0$ can be taken, and where it is taken, it defines the correct valuation of the game.

As the maximiser's choice to take the self-loop is not optimal, the resulting objective function the strategies define, that is $1-(1-\lambda)\val(a) + (1-\lambda)\val(b)$, cannot reach $0$.
But let us take a look at what an optimal solution w.r.t.\ this objective function looks like.

Optimal solutions can be taken from the corners of the polytope defined by the inequations, also known as its simplex.
In this case, the optimal solution (w.r.t.\ this initial objective function) is defined by making inequations (1) and (3) sharp: this provides the values $\val(a)=\frac{1}{1-\lambda}$ and $\val(b)=\frac{\lambda}{1-\lambda}$; the objective function takes the value $\lambda$ at this point.

For comparison, in the other corner of the simplex, defined by making inequations (2) and (3) sharp, we obtain the values $\val(a)=\val(b)=0$; the objective function takes the value $1$ at this point.
Finally, if we consider the last combination, making (1) and (2) sharp, this provides the values $\val(a)=\frac{1}{1-\lambda}$ and $\val(b)=0$, so that inequation (3) is not satisfied; this is therefore not a corner of the simplex.

Thus, in this toy example, while selecting the wrong edge cannot result in the objective function taking the value $0$, we still found the optimal solution.
In general, we might need to update the objective function.
To update the objective function, we change the outgoing edges of some (or all) vertices so that the overall value of the objective function goes down.
Note that this can be done not only when the linear program returns an optimal solution but also during its computation.
For example, when using a simplex method, updating the objective function can be used as an alternative pivoting rule at any point during the traversal of the simplex.

In general, the valuation returned as solution is computed by using objective functions based on strategies that are not necessarily optimal.
We now adjust the example so that choosing both self-loops is not only non-optimal, but also does not define the correct valuation of the DPG. 
To this end, we use different discount factors%
\footnote{Note that we can also replace the transitions with a smaller discount factor by multiple transitions with a larger discount factor. This would allow for keeping the discount factor uniform, but needlessly complicate the discussion and inflate the size of the example.}
for the game of Figure~\ref{fig:example}: we choose $\frac{1}{3}$ for the self-loop of the maximiser vertex and $\frac{2}{3}$ for the other two transitions, so that the three resulting inequations are (1) $\val(a) \leq 1 + \frac{2}{3} \val(a)$; (2) $\val(b) \geq \frac{1}{3} \val(b)$; and (3) $\val(b) \geq \frac{2}{3} \val(a)$.
Choosing both self loops results in the objective function to minimise $1-\frac{1}{3}\val(a) + \frac{2}{3}\val(b)$.

Making the adjusted inequations (2) and (3) sharp still results in the values $\val(a)=\val(b)=0$, and the objective function still takes the value of $1$. While making inequations (1) and (3) sharp provides the values $\val(a)=3$ and $\val(b)=2$; the objective function takes the value $\frac{4}{3}$ at this point.
Finally, if we consider the last combination, making (1) and (2) sharp still conflicts with inequation (3).

Thus, $\val(a)=\val(b)= 0$ would be the optimal solution for the given objective function, which is not the valuation of the game. 
We will then update the candidate strategies so that the sum of the offsets goes down.

\subsection{Comparison with strategy improvement}

The closest relatives to our new approach are strategy improvement algorithms.
Classic strategy improvement approaches solve the problem of finding the valuation of a game (and usually also co-optimal strategies) by (1) fixing a strategy for one of the players (we assume w.l.o.g. that this is the maximiser), (2) finding a valuation function for the one player game that results from fixing this strategy (often together with an optimal counter strategy for their opponent), and (3) updating the strategy of the maximiser by applying local improvements.
This is repeated until no local improvements are available, which entails that the constraint system is satisfied.

\begin{figure}[ht]
  \centering
  \begin{minipage}{.6\linewidth}
    \begin{algorithm}[H]
      \caption{\label{alg:si_alg} Strategy Improvement}
      \SetKwInOut{Input}{input} \SetKwInOut{Output}{output}
      
      \Input{A discounted payoff game $\mathcal{G} = (V_{\min}, V_{\max}, E, w, \lambda)$}
      \Output{The valuation $\val$ of $\mathcal G$}
      \SetInd{0.25em}{0.5em}
        {
        \nl $f \leftarrow{} \mathsf{ObjectiveFunction}(\mathcal{G})$ \\
        \nl $\sigma \leftarrow{} \mathsf{ChooseInitialStrategy}(\mathcal{G})$\\
        \nl \While{\sf{true}}{
            \nl $H \leftarrow{} \mathsf{Inequations}(\mathcal{G},\sigma)$\\
            \nl $\val \leftarrow{} \mathsf{LinearProgramming}(H, f)$\\
            \nl \If{$\sigma$ is optimal}{\Return $\val$}
                \nl $\sigma \leftarrow{} \mathsf{ChooseBetterStrategy}(\mathcal{G},\val,\sigma)$
            }
        }
    \end{algorithm}
    \end{minipage}
\end{figure}

These steps can be identified in Algorithm~\ref{alg:si_alg}. The objective function is fixed at the beginning (Line 1) and corresponds to the sum of the values of all the vertices. The algorithm simply maximises it%
\footnote{The optimal minimiser strategy chooses edge that minimises the value for every minimiser vertex. In the linear program, this is revlected by an inequeation for every outgoing edge, like the inequation $\val(a) \leq 1 + \lambda \val(a)$ from the running example, and then maximises the value, which forces the selected value to be the minimal value such that one such inequation is sharp.}
by searching for a solution that maximises the value of all vertices. (1) The strategy of one of the two players (for convention the maximiser -- if we want to instead fix the strategy of the minimiser, we would have to minimise the sum of the values for all vertices instead) is initialised at Line 2. (2) Before the computation of the valuation at Line 5, the set of constraints is updated accordingly to the selected strategy of the maximiser (Line 4). At this point, if the strategy is optimal w.r.t.\ the current valuation, then the algorithm ends. Otherwise, (3) there exists an improvement, which is applied at Line 7.

For Step (2) of this approach, we can use linear programming, which does invite a comparison with our technique.
The linear program for solving Step (2) would not use all inequations: it would, instead, replace the inequations defined by the currently selected edges of the maximiser by equations, while dropping the inequations defined by the other maximiser transitions.
The objective function would then be to maximise the values of all vertices while still complying with the remaining (in)equations.

Thus, in our novel symmetric approach, the constraints remain while the objective is updated; in strategy improvement, instead, the objective remains while the constraints are updated.
Moreover, the players and their strategies are treated quite differently in strategy improvement algorithms: while the candidate strategy of the maximiser results in making the inequations of the selected edges sharp (and dropping all other inequations of maximiser edges), the optimal counter strategy is found by maximising the objective. This is again in contrast to our novel symmetric approach, which treats both players equally.

A small further difference is in the valuations that can be taken: the valuations that strategy improvement algorithms can take are the valuations of strategies, while the valuations our objective improvement algorithm can take on the way are the corners of the simplex defined by the inequations.
Except for the only intersection point between the two (the valuation of the game), these corners of the simplex do not relate to the value of strategies. 
Table~\ref{tab:comparison} summarises these observations.

\renewcommand{\arraystretch}{0.9}
\begin{table}[t]
    \centering
    \begin{tabular}{c||c|c}
         & \textbf{Objective Improvement} & \textbf{Strategy Improvement} \\[0.5em]
        \hline
        \rule{0pt}{1em}
        \textbf{players}  & symmetric treatment & asymmetric treatment \\[0.5em]
        \textbf{constraints} & remain the same: & change: \\
         & one inequation per edge & one inequation for each edge \\
         & & defined by the current strategy for \\
         & & the strategy player, one inequation \\
         & & for every edge of their opponent \\[0.5em]
        \textbf{objective} & minimise errors for selected edges & maximise values \\[0.5em]
        \textbf{update} & objective: & strategy: \\
         & one edge for each vertex & one edge for each vertex \\
         & & of the strategy player  \\[0.5em]
        \textbf{valuations} & corners of simplex & defined by strategies
         
    \end{tabular}
    \vspace{1em}
    \caption{A comparison of the novel objective improvement with classic strategy improvement.}
    \label{tab:comparison}
\end{table}

\section{General Objective Improvement}\label{sec:general}

In this section, we present the approach outlined in the previous section more formally, while keeping the most complex step -- updating the candidate strategy to one which is \emph{better} in that it defines an optimisation function that can take a smaller value -- abstract. (We turn back to the question of how to find better strategies in Section~\ref{sec:choose}.)
This allows for discussing the principal properties more clearly.

\begin{figure}[ht]
  \centering
  \begin{minipage}{.6\linewidth}
    \begin{algorithm}[H]
      \caption{\label{alg:oi_alg} Objective Improvement}
      \SetKwInOut{Input}{input} \SetKwInOut{Output}{output}
      
      \Input{A discounted payoff game $\mathcal{G} = (V_{\min}, V_{\max}, E, w, \lambda)$}
      \Output{The valuation $\val$ of $\mathcal G$}
      \SetInd{0.25em}{0.5em}
        {
        \nl $H \leftarrow{} \mathsf{Inequations}(\mathcal{G})$\\
        \nl $\sigma \leftarrow{} \mathsf{ChooseInitialStrategies}(\mathcal{G})$\\
        \nl \While{\sf{true}}{
            \nl $f_\sigma \leftarrow{} \mathsf{ObjectiveFunction}(\mathcal{G}, \sigma)$ \\
            \nl $\val \leftarrow{} \mathsf{LinearProgramming}(H, f_\sigma)$\\
            \nl \If{ $f_\sigma(\val) = 0$}{\Return $\val$}
                \nl $\sigma \leftarrow{} \mathsf{ChooseBetterStrategies}(\mathcal{G},\val,\sigma)$
            }
        }
    \end{algorithm}
    \end{minipage}
\end{figure}

A general outline of our \emph{objective improvement} approach is reported in Algorithm~\ref{alg:oi_alg}.
Before describing the procedures called by the algorithm, we first outline the principle.

When running on a discounted payoff game $\mathcal{G}= (V_{\min}, V_{\max}, E, w, \lambda)$, the algorithm uses a set of inequations defined by the edges of the game and the owner of the source of each edge.
This set of inequations denoted by $H$ contains one inequation for each edge and (different to strategy improvement approaches whose set of inequations is a subset of $H$) $H$ never changes.

The inequations from $H$ are computed by a function called $\mathsf{Inequations}$ (Line 1)  that, given the discounted game $\mathcal{G}$, returns the set made up of one inequation per edge $e=(v,v') \in E$, defined as follows:
\[ I_e =  \begin{cases}
        \val(v) \geq w_{e} + \lambda_e \val(v') &\text{if }v \in V_{\max}, \\
        \val(v) \leq w_{e} + \lambda_e \val(v') &\text{otherwise.}
    \end{cases} 
\]

The set $H = \{I_e \mid e \in E\}$ is defined as the set of all inequations for the edges of the game.

The algorithm also handles strategies for both players, treated as a single strategy $\sigma$.
They are initialised (for example randomly) by the function $\mathsf{ChooseInitialStrategies}$ at Line 2.

This joint strategy is used to define an objective function $f_{\sigma}$ by calling function $\mathsf{ObjectiveFunction}$ at Line 4, whose value on an evaluation $\val$ is: $f_{\sigma}(\val)=\sum_{v \in V} f_\sigma(\val, v)$ with the following objective function components:
\[
  f_\sigma(\val, v) = \off(\val,(v,\sigma(v)))
\]
where the offset of an edge $(v,v')$ for a valuation is defined as follows:
\[
  \off(\val, (v,v')) =
  \begin{cases}
    \val(v) - (w_{(v,v')} + \lambda_{(v,v')} \val(v')) &\text{if } v \in V_{\max}, \\
    (w_{(v,v')} + \lambda_{(v,v')} \val(v')) - \val(v) &\text{otherwise.}
  \end{cases}
\]

This objective function $f_\sigma$ is given to a linear programming algorithm, alongside with the inequations set $H$.
We underline that, due to the inequation $I_{(v,v')}$, the value of $\off(\val, (v,v'))$ is non-negative for all $(v,v')\in E$ in any solution $\val$ (optimal or not) that satisfies the system of inequations $H$.

\begin{obs}
\label{obs:nonneg}
At Line 6 of Algorithm~\ref{alg:oi_alg}, the value of $f_\sigma(\val)$ is non-negative.
\end{obs}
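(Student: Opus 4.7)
The plan is to unfold the definition of $f_\sigma(\val)$ and observe that each summand is individually non-negative, using the fact that $\val$ satisfies every inequation in $H$ (which is guaranteed because $\mathsf{LinearProgramming}(H, f_\sigma)$ returns a valuation satisfying $H$).

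First I would recall that, by construction, $f_\sigma(\val) = \sum_{v \in V} \off(\val, (v, \sigma(v)))$. It therefore suffices to show that each term $\off(\val, (v, \sigma(v)))$ is non-negative. I would then split on the owner of $v$. For $v \in V_{\max}$, the edge $e = (v, \sigma(v))$ contributes the inequation $I_e : \val(v) \geq w_e + \lambda_e \val(\sigma(v))$ to $H$, and the offset is defined as $\off(\val, e) = \val(v) - (w_e + \lambda_e \val(\sigma(v)))$, which is exactly the slack of $I_e$ and hence non-negative since $\val$ satisfies $I_e$. Symmetrically, for $v \in V_{\min}$, the inequation $I_e : \val(v) \leq w_e + \lambda_e \val(\sigma(v))$ and the offset $\off(\val, e) = (w_e + \lambda_e \val(\sigma(v))) - \val(v)$ again coincide with the slack, so the offset is non-negative.

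Summing non-negative terms over $v \in V$ yields $f_\sigma(\val) \geq 0$. The only subtlety is ensuring that at Line 6 the valuation $\val$ in question really does satisfy $H$; but this is guaranteed by the specification of $\mathsf{LinearProgramming}(H, f_\sigma)$ (and in particular by the basis $\mathbf{b}$ of $H$ that $\val = \val(\mathbf{b})$ is derived from, which only tightens $|V|$ of the inequations in $H$ without ever violating them). There is no real obstacle here; the statement is essentially a sanity check that the offset machinery was set up with the right signs relative to the inequation system $H$.
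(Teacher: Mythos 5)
Your proof is correct and matches the paper's own justification, which notes (just before the observation) that $\off(\val,(v,v'))$ is non-negative for every edge in any valuation satisfying $H$, precisely because each offset is the slack of the corresponding inequation $I_{(v,v')}$; summing over the edges selected by $\sigma$ gives $f_\sigma(\val)\geq 0$. Your case split on the owner of $v$ simply spells out the sign convention that the paper leaves implicit.
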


We put a further restriction on $\val$, returned by the call $\mathsf{LinearProgramming}$, in that we require it to be the solution to a \emph{basis} $\mathbf{b}$ in $H$.
Such a basis consists of $|V|$ inequations that are satisfied sharply (again, as equations), such that, according to the definition of basis, these $|V|$ equations uniquely define the values of all vertices.
We refer to this unique set of values as \emph{the evaluation of $\mathbf{b}$}, denote by $\val_{\mathbf{b}}$.
Note that $\val_{\mathbf{b}}$ is in particular a solution of $H$, so that all $|E|$ inequations are satisfied, condition that in general does not holds.

The call $\mathsf{LinearProgramming}(H, f_\sigma)$ to some linear programming algorithm returns a solution $\val$ of the vertices that minimises $f_\sigma$ while satisfying $H$ (Line 5); for convenience, we require this solution to also be the evaluation $\val_{\mathbf{b}}$ for some basis $\mathbf{b}$ of $H$.
(Note that the simplex algorithm, for example, only uses solutions of this form in every step.)
We call this solution a \emph{solution associated to $\sigma$}.

We say that a solution $\val$ \emph{defines} strategies of both players if, for every vertex $v\in V$, the inequation of (at least) one of the outgoing edges of $v$ is sharp.
These are the strategies defined by using, for every vertex $v \in V$, an outgoing edge for which the inequation is sharp.
Note that there can be more than one of these inequations for some of the vertices.

\begin{obs}
\label{obs:0strategy}
If, for a solution $\val$ of $H$, $f_\sigma(\val) = 0$ holds, then, for every vertex $v \in V$, the inequation $I_{(v,\sigma(v))}$ for the edge $(v,\sigma(v))$ is sharp and $\val$ therefore defines strategies for both players -- those defined by $\sigma$, for example.
\end{obs}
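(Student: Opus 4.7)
The plan is to simply unfold the definition of $f_\sigma$ and then invoke the standard fact that a finite sum of non-negative reals vanishes only when each summand vanishes. Concretely, by definition,
\[
 f_\sigma(\val) \;=\; \sum_{v \in V} f_\sigma(\val,v) \;=\; \sum_{v \in V} \off(\val,(v,\sigma(v)))\ .
\]
Since $\val$ is a solution of $H$, it in particular satisfies the inequation $I_{(v,\sigma(v))}\in H$ for every vertex $v\in V$, so $\off(\val,(v,\sigma(v)))\geq 0$ for every $v$ (this is exactly the content of Observation~\ref{obs:nonneg} applied termwise).

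From the hypothesis $f_\sigma(\val)=0$, I conclude that each summand is $0$, i.e.\ $\off(\val,(v,\sigma(v)))=0$ for every $v\in V$. Spelling out the offset using the case distinction on whether $v\in V_{\max}$ or $v\in V_{\min}$ shows that in both cases the equality $\val(v) = w_{(v,\sigma(v))} + \lambda_{(v,\sigma(v))} \val(\sigma(v))$ holds; this is precisely the statement that the inequation $I_{(v,\sigma(v))}$ is sharp.

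Finally, the definition of ``$\val$ defines strategies of both players'' asks that at every vertex $v\in V$ at least one outgoing edge has a sharp inequation. The edge $(v,\sigma(v))$ provides such an outgoing edge for every $v$, so $\sigma$ itself is a witness to the strategies defined by $\val$. No real obstacle is expected here: the statement is essentially a non-negativity-plus-zero-sum argument combined with a direct unpacking of the definitions, and none of the inequalities require more than what is already built into $H$.
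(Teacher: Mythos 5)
Your proof is correct and matches the argument the paper relies on (the paper states this as an observation without a written proof, but the termwise non-negativity of the offsets for any solution of $H$ is noted explicitly just before Observation~\ref{obs:nonneg}, and the zero-sum-of-non-negatives step plus unfolding the definition of sharpness is exactly the intended justification). Nothing is missing.
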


Instead of checking whether $\val$ defines the strategies of both players, we can use, alternatively, $f_\sigma(\val) = 0$ as a termination condition, as shown at Line 6 in Algorithm~\ref{alg:oi_alg}, since in this case $\sigma$ must define co-optimal strategies.

\begin{thm}
\label{thm:main}If $\sigma$ describes co-optimal strategies, then $f_\sigma(\val) = 0$ holds at Line 6 of Algorithm~\ref{alg:oi_alg}.
If $\val$ (from Line 5 of Algorithm~\ref{alg:oi_alg}) defines joint strategies $\sigma'$ for both players, then $\sigma'$ is co-optimal and $\val$ is the valuation of $\mathcal G$.
\end{thm}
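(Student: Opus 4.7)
The plan is to handle the two implications separately, in both cases relating the offset $\off(\val,(v,\sigma(v)))$ of a single edge to one step in a telescoping expansion of a play along $\sigma$.

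For the forward direction, suppose $\sigma$ describes co-optimal strategies. Then $\val(\mathcal{G})$ satisfies every inequation in $H$ and, by the optimality criterion stated in Section~\ref{sec:prelims}, $\val(\mathcal{G})(v) = w_{(v,\sigma(v))} + \lambda_{(v,\sigma(v))}\val(\mathcal{G})(\sigma(v))$ holds for every $v \in V$. Hence every summand of $f_\sigma(\val(\mathcal{G}))$ vanishes, so $f_\sigma(\val(\mathcal{G})) = 0$. Because Observation~\ref{obs:nonneg} tells us that $f_\sigma$ is non-negative on the whole feasible set of $H$, the value $0$ is the minimum of the linear program, and therefore any optimal basic solution $\val$ returned at Line~5 satisfies $f_\sigma(\val) = 0$ as well.

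For the backward, more substantive direction, the hypothesis that $\val$ defines strategies joint in $\sigma$ means precisely that $I_{(v,\sigma(v))}$ is sharp at every $v$, that is, $\val(v) = w_{(v,\sigma(v))} + \lambda_{(v,\sigma(v))}\val(\sigma(v))$. Fix an arbitrary Min strategy $\tau_{\min}$ and a vertex $v$, and consider the play $\rho = e_0 e_1 \ldots$ from $\rho(v, \tau_{\min}, \sigma_{\max})$. At a Max vertex $v_i$ the traversed edge is $(v_i,\sigma(v_i)) = e_i$, so the sharp equation gives $\val(v_i) = w_{e_i} + \lambda_{e_i}\val(v_{i+1})$; at a Min vertex the inequation in $H$ gives $\val(v_i) \leq w_{e_i} + \lambda_{e_i}\val(v_{i+1})$. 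A direct induction on $n$ then yields
\[ \val(v) \,\leq\, \sum_{i=0}^{n-1} w_{e_i}\prod_{j=0}^{i-1}\lambda_{e_j} \;+\; \val(v_n)\prod_{j=0}^{n-1}\lambda_{e_j}. \]

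The main technical step is passing to the limit $n \to \infty$. Since every discount factor lies in $[0,1)$ and $E$ is finite, $\lambda^\star := \max_{e \in E}\lambda_e < 1$, so $\prod_{j=0}^{n-1}\lambda_{e_j} \leq (\lambda^\star)^n \to 0$, and $\val$ is bounded on the finite set $V$; the residual term therefore vanishes and $\val(v) \leq \out(\rho)$. Taking the infimum over $\tau_{\min}$ and using that $\sigma_{\max}$ is one specific element of $\Sigma_{\max}$ yields $\val(v) \leq \val(\mathcal{G})(v)$. The entirely symmetric argument, unfolding along $\sigma_{\min}$ against an arbitrary $\tau_{\max}$, delivers the reverse inequality, whence $\val = \val(\mathcal{G})$. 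Finally, since $\val = \val(\mathcal{G})$, the sharpness of $I_{(v,\sigma(v))}$ at every vertex is precisely the optimality criterion from the preliminaries for both $\sigma_{\min}$ and $\sigma_{\max}$, so $\sigma$ is co-optimal.
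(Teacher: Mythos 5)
Your proof is correct, and for the second (substantive) claim it takes a genuinely more self-contained route than the paper. The paper's own proof simply asserts the key fact -- that $\val(\mathcal G)$ is the unique solution of $H$ for which some outgoing inequation is sharp at every vertex, and that the sharp edges there describe co-optimal strategies -- as a known characterisation of discounted payoff games, and derives both claims from it in two sentences. You instead prove the needed half of that characterisation from first principles: unfolding the sharp equations along $\sigma_{\max}$ against an arbitrary minimiser strategy (using the $H$-inequations at minimiser vertices) gives $\val(v)\leq\out(\rho)$ after the telescoping sum and the $(\lambda^\star)^n\to 0$ limit, hence $\val\leq\val(\mathcal G)$, with the symmetric unfolding giving the reverse inequality; co-optimality of $\sigma$ then drops out of the optimality criterion from Section~\ref{sec:prelims}. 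Your treatment of the first claim is also slightly more careful than the paper's, since you note explicitly that $f_\sigma(\val(\mathcal G))=0$ combined with Observation~\ref{obs:nonneg} forces the linear program's optimum, and hence the returned basic solution, to attain $0$. What the paper's appeal to the standard fixed-point characterisation buys is brevity; what your argument buys is a proof that is verifiable within the paper itself, including the non-uniform discount factors the paper allows (where $\lambda^\star=\max_e\lambda_e<1$ is exactly the right contraction constant, as you observe).
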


\begin{proof}
The valuation $\val=\val_{\mathcal G}$ of the game is the unique solution of $H$ for which, for all vertices $v$, the inequation to (at least) one of the outgoing edges of $v$ is sharp. This case implies that the edges for which they are sharp describe co-optimal strategies. 
This solution of $H$ is thus the only one that \emph{defines} strategies for both players, which shows the second claim. The uniqueness of the solution is a consequence of the sharpness: while inequalities define the solution region, equations select exactly one.

Moreover, if $\sigma$ describes co-optimal strategies, then $f_\sigma (\val)=0$ holds for $\val=\val_{\mathcal G}$ (and for this valuation only), which establishes the first claim.
\end{proof}

The theorem above ensures that, in case the condition at Line 6 holds, the algorithm terminates and provides the value of the game that then allows us to infer optimal strategies of both players.
Otherwise, we have to improve the objective function and make another iteration of the while loop.
At Line 7, $\mathsf{ChooseBetterStrategies}$ can be any procedure that, for $f_\sigma(\val) \neq 0$, provides a joint strategy $\sigma'$ \emph{better} than $\sigma$ as defined below.

\begin{defi}[Better Strategy]
\label{def:better}
    A joint strategy $\sigma'$ for both players is \emph{better} than a joint strategy $\sigma$ if, and only if,
    $$\min_{\val'} f_{\sigma'}(\val') < \min_{\val} f_\sigma(\val)$$
\end{defi}

Given a game $\mathcal{G}$, a strategy $\sigma'$ and the corresponding objective function $f_{\sigma'}$ computed by $\mathsf{ObjectiveFunction}(\mathcal{G}, \sigma')$, the call $\mathsf{LinearProgramming}(H, f_{\sigma'})$ computes the minimal value of $f_{\sigma'}$. If this value is strictly lower than the minimal value for $f_\sigma$, computed by $\mathsf{LinearProgramming}(H, f_{\sigma})$, then $\sigma'$ is better than $\sigma$.

While we discuss how to implement this key function in the next section, we observe here that the algorithm terminates with a correct result with any implementation that chooses a better objective function in each round: correctness is due to it only terminating when $\val$ \emph{defines} strategies for both players, which implies (cf.\ Theorem~\ref{thm:main}) that $\val$ is the valuation of $\mathcal G$ ($\val=\val_{\mathcal G}$) and all strategies defined by $\val$ are co-optimal.
Termination is obtained by a finite number of positional strategies: by Observation~\ref{obs:nonneg}, the value of the objective function of all of them is non-negative, while the objective function of an optimal solution to co-optimal strategies is $0$ (cf. Theorem~\ref{thm:main}), which meets the termination condition of Line 6 (cf.\ Observation~\ref{obs:0strategy}).

\begin{cor}
Algorithm~\ref{alg:oi_alg} always terminates with the correct value.
\end{cor}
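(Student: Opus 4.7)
The plan is to split the corollary into correctness upon termination and termination itself, with both parts reducing to pieces already assembled in this section.

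For \emph{correctness upon termination}, I would observe that the only exit from the \textsf{while} loop is at Line 6, so the returned $\val$ satisfies $f_\sigma(\val)=0$ for the current $\sigma$. Observation~\ref{obs:0strategy} then gives that $\val$ makes the inequation $I_{(v,\sigma(v))}$ sharp at every $v\in V$, so $\val$ defines strategies for both players. Theorem~\ref{thm:main} immediately yields $\val=\val(\mathcal G)$, i.e.\ the returned valuation is the value of the game as required.

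For \emph{termination}, I would use a strictly decreasing potential argument. Define $\mathrm{opt}(\sigma) := \min\{f_\sigma(\val) : \val \text{ satisfies } H\}$; this minimum is well-defined because $\val(\mathcal G)$ itself satisfies $H$ (so the LP is feasible) and $f_\sigma$ is bounded below by $0$ on the feasible region by Observation~\ref{obs:nonneg}. The definition of ``better strategy'' in the preceding paragraph is precisely the statement $\mathrm{opt}(\sigma') < \mathrm{opt}(\sigma)$, so successive iterations of the loop produce a strictly decreasing sequence of potentials. Since the joint positional strategies $\sigma : V \to V$ form a finite set, the sequence of strategies $\sigma_0, \sigma_1, \sigma_2, \ldots$ visited by the algorithm has no repetitions and is therefore finite, forcing the loop to exit in finitely many steps.

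The one subtle point worth flagging, and the main obstacle hiding in plain sight, is that this argument tacitly presumes $\mathsf{ChooseBetterStrategies}$ actually succeeds on every call, i.e.\ that a better $\sigma'$ exists whenever $f_\sigma(\val) > 0$. That existence claim is the real technical content of the approach and is deferred to Section~\ref{sec:choose}; once it is available, the corollary follows as a short consequence of the finite-descent argument above combined with Theorem~\ref{thm:main}.
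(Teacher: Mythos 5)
Your proposal is correct and follows essentially the same route as the paper: correctness via Observation~\ref{obs:0strategy} and Theorem~\ref{thm:main}, and termination via a strict decrease of the optimal objective value over the finite set of joint positional strategies. The caveat you flag --- that the argument presupposes $\mathsf{ChooseBetterStrategies}$ always succeeds when $f_\sigma(\val)>0$ --- is also acknowledged by the paper, which states the result ``with any implementation that chooses a better objective function in each round'' and defers the existence of such improvements to Section~\ref{sec:choose}.
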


\section{Choosing Better Strategies}
\label{sec:choose}
In this section, we discuss sufficient criteria for finding a procedure that efficiently implements $\mathsf{ChooseBetterStrategies}$.
For this, we make four observations described in the next subsections:

\begin{enumerate}
    \item All local improvements can be applied.
        A strategy $\sigma'$ is a local improvement to a strategy $\sigma$ if $f_{\sigma'}(\val) < f_\sigma(\val)$ holds for the current solution $\val$ (Section~\ref{ssec:local}).
    
    \item If the current solution $\val$ does not \emph{define} a pair of strategies $\sigma$ for both players and has no local improvements, then a better strategy $\sigma'$ can be found applying only switches from and to edges that already have offset $0$ (Section~\ref{ssec:noLocal}).
    
    \item The improvement mentioned in the previous point can be found for special games -- the sharp and improving games defined in Section~\ref{ssec:efficient} -- by trying a single-edge switch.
    
    \item Games can almost surely be made sharp and improving by adding random noise that retains optimal strategies (Section~\ref{ssec:sharpeing}).
\end{enumerate}

Together, these four points provide efficient means for finding increasingly better strategies, and thus to find the co-optimal strategies and the valuation of the discounted payoff game.

As a small side observation, when using a simplex-based technique to implement $\mathsf{LinearProgramming}$ at Line 5 of Algorithm~\ref{alg:oi_alg}, then the pivoting of the objective function from point (1) and the pivoting of the basis can be mixed (this will be discussed in Section~\ref{ssec:mixing}).

\subsection{Local Improvements}\label{ssec:local}

The simplest and most common case of creating better strategies $\sigma'$ from a solution for the objective $f_\sigma$ for a strategy $\sigma$ is to consider \emph{local improvements}.
Much like local improvements in strategy iteration approaches, local improvements consider, for each vertex $v$, a successor $v' \neq \sigma(v)$, such that $\off(\val,(v,v')) < \off(\val,(v,\sigma(v)))$ for the current solution $\val$, which is optimal for the objective function $f_\sigma$.

To be more general, our approach does not necessarily require one to select only local improvements, but it can work with global improvements, though we cannot see any practical use of choosing differently.
For example, if we treat the function as a global improvement approach, we can update the value of a vertex $v$ so that it increases by 1 and update the value of another vertex $v'$ so that it decreases by 2. The overall value of the function will decrease even if locally some components increased their value. Interestingly, this cannot be done with a strategy improvement approach, as it requires one to always locally improve the value of each vertex when updating.

\begin{lem}\label{lem:better}
If $\val$ is an optimal solution for the linear programming problem at Line 5 of Algorithm~\ref{alg:oi_alg} and $f_{\sigma'}(\val) < f_\sigma(\val)$, then $\sigma'$ is better than $\sigma$.
\end{lem}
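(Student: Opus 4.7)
The plan is to argue that this is essentially immediate from the definitions once we observe that the feasible region $H$ is the same for both linear programs: only the objective function changes when we move from $\sigma$ to $\sigma'$. So the strategy is simply to chain three (in)equalities.

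First I would unfold the definition of ``better'' from the paragraph preceding the lemma: $\sigma'$ is better than $\sigma$ exactly when
\[
  \min_{\val'' \models H} f_{\sigma'}(\val'') \;<\; \min_{\val'' \models H} f_\sigma(\val'').
\]
Then I would use that $\val$ is an optimal valuation for the linear program $\mathsf{LinearProgramming}(H, f_\sigma)$ at Line 5, which has two consequences: (i) $\val$ satisfies $H$, hence $\val$ is a feasible point also for the LP with objective $f_{\sigma'}$ over $H$; and (ii) $f_\sigma(\val) = \min_{\val'' \models H} f_\sigma(\val'')$. By (i), the minimum of $f_{\sigma'}$ over $H$ is at most $f_{\sigma'}(\val)$.

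Chaining these with the hypothesis $f_{\sigma'}(\val) < f_\sigma(\val)$ gives
\[
  \min_{\val'' \models H} f_{\sigma'}(\val'') \;\le\; f_{\sigma'}(\val) \;<\; f_\sigma(\val) \;=\; \min_{\val'' \models H} f_\sigma(\val''),
\]
which is exactly the definition of $\sigma'$ being better than $\sigma$.

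Honestly there is no real obstacle here; the statement is a one-line consequence of ``changing the objective while keeping the constraints can only improve or match the previous optimum when evaluated at the old optimum.'' The only subtlety worth mentioning is the implicit requirement from Section~\ref{sec:general} that the LP solution be associated with a basis $\mathbf b$, but this plays no role in the argument: feasibility of $\val$ with respect to $H$ is all that is needed, and basis-associatedness is a property of the specific optimum returned, not of the feasible region. So the proof will be short and essentially definitional.
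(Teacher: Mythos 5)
Your proof is correct and follows essentially the same route as the paper's: both use that $\val$, being optimal for $f_\sigma$, is a feasible point of $H$, so the optimum of $f_{\sigma'}$ over $H$ is at most $f_{\sigma'}(\val) < f_\sigma(\val)$, which is the definition of \emph{better}. The remark about basis-associatedness being irrelevant here is also accurate.
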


\begin{proof}
The solution $\val$ is, being an optimal solution for the objective $f_\sigma$, a solution to the system of inequations $H$. For a solution $\val'$ that is optimal for $f_{\sigma'}$, we thus have $f_{\sigma'}(\val') \leq f_{\sigma'}(\val) < f_\sigma(\val)$, which implies that $\sigma'$ is better than $\sigma$ according to Definition~\ref{def:better} of a better strategy.
\end{proof}

This allows us to identify local improvements cheaply.

\begin{cor}\label{cor:better}
If $\val$ is an optimal solution for the linear programming problem at Line 5 of Algorithm~\ref{alg:oi_alg}, there is an edge $(v,v')$ such that $\off(\val,(v,v')) < \off(\val,(v,\sigma(v)))$, and $\sigma'$ is a strategy for both players s.t.\ $\off(\val,(v,v')) \geq \off(\val,(v,\sigma'(v)))$ holds for all edges $(v,v')$, then $\sigma'$ is better than $\sigma$.
\end{cor}

We say that $\val$ \emph{identifies} these strategies.

\subsection{No Local Improvements}\label{ssec:noLocal}

The absence of local improvements means that, for all vertices $v \in V$ and all outgoing edges $(v,v') \in E$, $\off(\val,(v,v')) \geq \off(\val,(v,\sigma(v)))$.

We define for a solution $\val$ optimal for $f_\sigma$ (like the $\val$ produced in Line 5 of Algorithm~\ref{alg:oi_alg}):

\begin{itemize}
\item $S_\val^\sigma = \{ (v,v') \in E \mid \off(\val,(v,v')) \leq \off(\val,(v,\sigma(v)))\}$ as the set of \emph{at least stale} edges;
    naturally, every vertex has at least one outgoing stale edge: the one defined by $\sigma$;
    
\item $E_\val = \{ (v,v') \in E \mid \off(\val,(v,v')) =  0\}$ as the set of edges, for which the inequation for $\val$ is sharp;
in particular, all edges in the basis of $H$ that defines $\val$ are sharp (and at least stale); and
    
\item $E_\val^\sigma$ as any set of edges between $E_\val$ and $S_\val^\sigma$ (i.e.\ $E_\val \subseteq E_\val^\sigma \subseteq S_\val^\sigma$) such that $E_\val^\sigma$ contains an outgoing edge for every vertex $v \in V$;
   we are interested to deal with sets that retain the game property that every vertex has a successor, we can do that by adding (non-sharp) at least stale edges to $E_\val$.
\end{itemize}  

Note that $S_\val^\sigma$ is such a set, and, therefore, an adequate set is easy to identify.
However, we might be interested in keeping the set small and choosing the edges defined by $E_\val$ plus one outgoing edge for every vertex $v$ that does not have an outgoing edge in $E_\val$.
Where there is no local improvement, the most natural solution is to choose the edge $(v,\sigma(v)) \in E_\val^\sigma$ defined by $\sigma$ for each such vertex $v$.

\begin{obs}\label{obs:stalegame}
If $\mathcal{G}= (V_{\min}, V_{\max}, E, w, \lambda)$ is a DPG and $\sigma$ a strategy for both players such that $\val$ is an optimal solution for the objective $f_\sigma$ to the system of inequations $H$,
then $\mathcal G'= (V_{\min}, V_{\max}, E_\val^\sigma, w, \lambda)$ is also a DPG.
\end{obs}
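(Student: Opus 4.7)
The plan is to check, item by item, the three conditions that define a discounted payoff game in Section~\ref{sec:prelims}: a partition of the vertex set into $V_{\min}$ and $V_{\max}$, a finite directed graph $(V, E')$ without sinks, and weight/discount-factor functions on the edges. The partition $V_{\min} \cup V_{\max}$ is inherited unchanged from $\mathcal G$, and the functions $w$ and $\lambda$ restrict to $E_\val^\sigma \subseteq E$ without any issue, so the only substantial thing to verify is the absence of sinks in $(V, E_\val^\sigma)$.

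For that, I would first argue that the set $E_\val^\sigma$ is well defined, i.e.\ that at least one set satisfying the bracketing $E_\val \subseteq E_\val^\sigma \subseteq S_\val^\sigma$ together with the ``outgoing edge for every vertex'' requirement actually exists. This follows from observing that $S_\val^\sigma$ itself already has an outgoing edge at every vertex: by definition of $S_\val^\sigma$, for each $v \in V$ the edge $(v, \sigma(v))$ satisfies $\off(\val, (v, \sigma(v))) = \off(\val, (v, \sigma(v)))$ trivially and hence lies in $S_\val^\sigma$. Thus the choice $E_\val^\sigma := S_\val^\sigma$ is always admissible, and any valid $E_\val^\sigma$ lies sandwiched between this and $E_\val$ while still providing an outgoing edge at every vertex by the very definition imposed on it.

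The absence of sinks in $(V, E_\val^\sigma)$ is then immediate: the defining condition on $E_\val^\sigma$ explicitly requires at least one outgoing edge for every $v \in V$. Finally, $V$ remains finite and $E_\val^\sigma \subseteq V \times V$, so $(V, E_\val^\sigma)$ is a finite directed graph without sinks, completing the verification that $\mathcal G'$ is a DPG.

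There is no genuine obstacle here; the statement is essentially a bookkeeping check that the way $E_\val^\sigma$ was defined already guarantees the sink-freeness needed for the reduced structure to qualify as a DPG. The only mildly subtle point is the well-definedness of $E_\val^\sigma$, which is why I would make the witness $(v,\sigma(v)) \in S_\val^\sigma$ explicit rather than leave it implicit.
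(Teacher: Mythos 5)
Your proposal is correct and matches the paper's argument, which is the one-line observation that every vertex retains at least one outgoing transition in $E_\val^\sigma$ (the paper likewise relies on $(v,\sigma(v)) \in S_\val^\sigma$ being available by definition). Your additional remarks on the well-definedness of $E_\val^\sigma$ and the inherited partition and weight functions are just a more explicit spelling-out of the same bookkeeping.
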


This simply holds because every vertex $v \in V$ retains at least one outgoing transition.

\begin{lem}\label{lem:stale}
Let $\mathcal{G}= (V_{\min}, V_{\max}, E, w, \lambda)$ be a DPG, $\sigma$ a strategy for both players, $\val$ an optimal solution returned at Line 5 of Algorithm~\ref{alg:oi_alg} for $f_\sigma$.
If $\val$ does not define strategies of both players, then there is a better strategy $\sigma'$ such that, for all $v \in V$, $(v,\sigma'(v)) \in E_\val^\sigma$.
\end{lem}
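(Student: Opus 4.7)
The plan is to build the improved strategy $\sigma'$ by solving the sub-DPG restricted to the ``permitted'' edges, and then moving infinitesimally from $\val$ towards the value of that sub-DPG. Concretely, consider $\mathcal G^\sigma = (V_{\min}, V_{\max}, E_\val^\sigma, w, \lambda)$, which is a discounted payoff game by Observation~\ref{obs:stalegame}. Pick a pair of positionally co-optimal strategies $\sigma'$ for $\mathcal G^\sigma$ and let $\val^* = \val(\mathcal G^\sigma)$ be its value. Since $\sigma'$ is a strategy of $\mathcal G^\sigma$, we automatically have $(v,\sigma'(v)) \in E_\val^\sigma$ for every $v$, which is the support condition the lemma demands.

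I would next show that the two objective functions agree at $\val$, i.e.\ $f_{\sigma'}(\val) = f_\sigma(\val)$. Every edge $(v,\sigma'(v))$ lies in $E_\val^\sigma \subseteq S_\val^\sigma$, so it is stale with respect to $\sigma$, giving $\off(\val,(v,\sigma'(v))) = \off(\val,(v,\sigma(v)))$ for every $v$; summation delivers the equality. Moreover $f_\sigma(\val) > 0$, because by hypothesis $\val$ does not define strategies of both players, so some $v^*$ has no sharp outgoing edge, and in particular $\off(\val,(v^*,\sigma(v^*))) > 0$.

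The heart of the proof is a convexity move along $\val(t) = (1-t)\val + t\val^*$ for $t \in [0,1]$. Since $\off(\cdot,e)$ is affine in its first argument, $\off(\val(t),e) = (1-t)\off(\val,e) + t\off(\val^*,e)$ for every edge $e$. For $e \in E_\val$ the first summand is $0$ and the second is non-negative, because $E_\val \subseteq E_\val^\sigma$ and $\val^*$ satisfies every inequation of the sub-DPG. For $e \notin E_\val$ we have $\off(\val,e) > 0$, so by continuity $\off(\val(t),e) > 0$ for all sufficiently small $t$. Hence there exists $t_0 > 0$ with $\val(t_0) \in H$. Co-optimality of $\sigma'$ in $\mathcal G^\sigma$ gives $f_{\sigma'}(\val^*) = 0$, and affinity then yields $f_{\sigma'}(\val(t_0)) = (1-t_0)f_\sigma(\val) < f_\sigma(\val)$. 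Thus $\mathsf{LinearProgramming}(H, f_{\sigma'}) \le f_{\sigma'}(\val(t_0)) < f_\sigma(\val)$, which by the definition at the end of Section~\ref{sec:general} means $\sigma'$ is strictly better than $\sigma$.

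The step I expect to require the most care is the feasibility claim. I am trading a potentially global move to $\val^*$ (which need not itself lie in $H$, since $\val^*$ is only guaranteed to satisfy the inequations of $\mathcal G^\sigma$, not of the whole game) for an infinitesimal move towards $\val^*$; the justification is precisely that the edges tight at $\val$, namely $E_\val$, sit inside $E_\val^\sigma$, while the slack constraints remain slack under small perturbations. The no-local-improvement hypothesis does not enter the argument above, it merely rules out the easier case already handled by Lemma~\ref{lem:better}.
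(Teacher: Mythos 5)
Your proof is correct and follows essentially the same route as the paper's: form the sub-game on $E_\val^\sigma$ (via Observation~\ref{obs:stalegame}), take its valuation and the co-optimal strategies $\sigma'$ it defines, note $f_{\sigma'}(\val)=f_\sigma(\val)>0$ and $f_{\sigma'}(\val^*)=0$, and use a convex combination of $\val$ and $\val^*$ to exhibit a feasible point of $H$ where $f_{\sigma'}$ is strictly below $f_\sigma(\val)$. One small caveat on your closing remark: the no-local-improvement hypothesis is not entirely dispensable, since it is what guarantees $E_\val \subseteq S_\val^\sigma$, i.e., that a set $E_\val^\sigma$ sandwiched between them exists and that the sharp edges used by $\sigma'$ are indeed stale, which is exactly the step giving $f_{\sigma'}(\val)=f_\sigma(\val)$.
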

\begin{proof}
By Observation~\ref{obs:stalegame},  $\mathcal G'=  (V_{\min}, V_{\max}, E_\val^\sigma, w, \lambda)$ is a DPG. Let $\val'$ be the value of $\mathcal G'$, and $\sigma'$ be the strategies for the two players defined by it.

If $\val'$ is also a solution of $\mathcal G$, then we are done. However, this need not be the case, as the set of inequations $H'$ for $\mathcal G'$ is smaller than the set of inequations $H$ for $\mathcal G$, so $\val'$ might violate some of the inequations that are in $H$, but not in $H'$.
Given that $\val'$ is a solution for $\mathcal G'$, it satisfies all inequations in $H'$.
Moreover, since $\val$ also satisfies all inequations of $H'$, it follows that the same inequations hold for every convex combination of $\val$ and $\val'$.

We now note that the inequations of $H$ that are not in $H'$ are not sharp for $\val$. Thus, there is an $\varepsilon \in (0,1]$ such that the convex combination $\val_\varepsilon = \varepsilon \cdot \val' + (1-\varepsilon) \cdot \val$ is a solution to those inequations.

We have $\off(\val,(v,\sigma'(v))) \leq \off(\val,(v,\sigma(v)))$ for every vertex $v$, as $\sigma'$ only uses the at least stale edges from $E^\sigma_\val$, so that in particular $f_{\sigma'}(\val) \leq f_\sigma(\val)$ holds.
We also have $f_\sigma(\val)>0$ by assumption and $f_{\sigma'}(\val')=0$ by the optimality of $\val'$ for $H'$.

For an optimal solution $\val''$ of $H$ for the objective $f_{\sigma'}$, this provides $f_{\sigma'}(\val'') \leq f_{\sigma'}(\val_\varepsilon) < f_\sigma(\val)$.

Therefore, $\sigma'$ is better than $\sigma$.
\end{proof}

Where $\val$ does not define strategies of both players, there are always at least $|V|$ inequations sharp, and thus there are vertices with multiple outgoing edges in $E^\sigma_\val$ that all have $0$ offset.

While Lemma \ref{lem:stale} also holds when there is a local improvement, we would not use it then, as finding local improvements is much easier.

When there is no local improvement, the most natural choice is $E_\val^\sigma = E_\val \cup \{(v,\sigma(v))\mid v \in V\}$: this translates into keeping all transitions, for which the offset is \emph{not} $0$.
As a result, $\sigma'$ would change some of those for which the offset already is $0$, but agree with $\sigma$ for all vertices where this is currently not the case. This is a slightly surprising choice, since to progress one intuitively has to improve on the transitions whose offset is positive---which are then the ones one keeps.

\subsection{Games with Efficient Objective Improvement}\label{ssec:efficient}

In this subsection, we consider sufficient conditions for finding better strategies efficiently.
Note that we only have to consider cases where the termination condition (Line 6 of Algorithm~\ref{alg:oi_alg}) is not met.

The simplest condition for efficiently finding better strategies is the existence of local improvements.
In particular, it is easy to find, for a given solution $\val$, strategies $\sigma'$ for both players such that $f_{\sigma'}(\val) \leq f_{\sigma''}(\val)$ holds for all strategies $\sigma''$.
When there are local improvements, we can obtain a better strategy simply by applying them.
This leaves the case in which there are no local improvements, but where $\val$ also does not \emph{define} strategies for the two players.
We have seen that we can obtain a better strategy by only swapping edges, for which the inequations are sharp (Lemma~\ref{lem:stale}).

We now describe two conditions that, when simultaneously met, allow us to efficiently find better strategies:
that the games are \emph{sharp} and \emph{improving}.

\subsection*{Sharp games}
To do this efficiently, it helps if there are always $|V|$ inequations that are sharp for a solution $\val$ that minimises $f_\sigma(\val)$ for some $\sigma$.
Of course, if $\val$ is a solution returned by the simplex method, then there must be at least $|V|$ sharp inequations, as $\val$ is the solution defined by making $|V|$ inequations sharp (namely those that form the basis).
By requiring that there are exactly $|V|$ sharp inequations we require that an optimal solution defines a basis.
We call such a set of inequations $H$ and games that define them \emph{sharp} games.

\begin{defi}[Sharp Game]
\label{def:sharp}
A game $\mathcal G$ is called \emph{sharp} for a solution $\val$ if, and only if, $\val$ satisfies exactly $|V|$ inequations $H$ computed by $\mathsf{Inequations}(\mathcal{G})$ sharply.
It is called \emph{sharp} if no solution satisfies strictly more than $|V|$ inequations sharply.
\end{defi}

\subsection*{Improving games}
The second condition, which allows us to identify better strategies efficiently, is to assume that, for every strategy $\sigma$ for both players, if a solution $\val$ defined by a basis is not optimal for $f_\sigma$ under the constraints $H$, then there is a single basis change that improves it.
We call such games \emph{improving} and show that all sharp games are improving.

DPGs are not always improving. Being improving is a very useful property as it removes the problem that the simplex method can stall, as a single basis change is enough for improving games to improve the value of the objective function (hence the name).
For non-improving games, adjacent solutions might not improve although the current basis does not define a solution that is optimal for the current objective function.

\begin{obs}
Bases define solutions, and solutions identify (sets of) strategies: as we have used in Corollary \ref{cor:better}, it is easy to identify for a solution $\val$ the pair of strategies $\sigma$ (or the pairs of strategies if there are many) for whom $f_\sigma(\val)$ is the smallest among all pairs of strategies; they are simply those pairs that minimise the offset for every vertex.
\end{obs}

We therefore say that a solution identifies these pairs of strategies, and that a basis defines the pairs of strategies defined by the solution it defines.

\begin{defi}[Improving Game]
\label{def:impr}
A game $\mathcal G$ is called $\sigma$-\emph{improving} for an objective function $f_\sigma$ defined by a joined strategy $\sigma$ and a non-optimal solution $\val$ defined by a basis for $f_\sigma$ if, and only if, there is a single basis change that leads to a better solution $\val'$ (having a smaller value of $f_\sigma(\val')<f_\sigma(\val)$) defined by the new basis.

A game $\mathcal G$ is called \emph{improving} if it is improving for all objective functions $f_\sigma$ defined by a joined strategy $\sigma$ and all solutions $\val$ defined by bases that are not optimal w.r.t.\ $f_\sigma$.
\end{defi}

\begin{thm}\label{thm:sharpImprove}
Every sharp game $\mathcal G$ is improving.
\end{thm}

\begin{proof}
As our definition of sharp games simply says that every basic solution is nondegenerate, every step of a simplex algorithm (outside of optimal solutions) will provide a strict improvement \cite[Theorem 3.3]{DBLP:books/daglib/0014645}, so that $\mathcal G$ is $\sigma$-improving for every joined strategy $\sigma$ and all solutions $\val$ defined by bases that are not optimal w.r.t.\ $f_\sigma$.    
\end{proof}

We call a solution $\val'$ whose basis can be obtained from that of $\val$ by a single change to the basis of $\val$ a neighbouring solution to $\val$. 
We show that, for improving games, we can refine the result of Lemma~\ref{lem:stale} so that the better strategy $\sigma'$ also guarantees $f_{\sigma'}(\val')<f_\sigma(\val)$ for some neighbouring solution $\val'$ to $\val$. 

This allows us to consider $O(|E|)$ basis changes and, where they define a solution, to seek optimal strategies for a given solution.
Finding an optimal strategy for a given solution is straightforward.

\begin{thm}\label{theo:improve}
Let $\mathcal{G}= (V_{\min}, V_{\max}, E, w, \lambda)$ be an improving DPG, $\sigma$ a strategy for both players that is not co-optimal, $\val$ an optimal solution returned at Line 5 of Algorithm~\ref{alg:oi_alg} for $f_\sigma$. 
Then there is (at least) one neighbouring solution $\val''$ to $\val$ such that there is a better strategy $\sigma'$ that satisfies $f_{\sigma'}(\val'')<f_\sigma(\val)$.

This strategy $\sigma'$ can be selected in such a way that $(v,\sigma'(v)) \in E_\val^\sigma$ holds for all $v\in V$ for the given set $E_\val^\sigma$.
\end{thm}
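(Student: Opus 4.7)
The plan is to combine Lemma~\ref{lem:stale} with the improving hypothesis. First, I would invoke Lemma~\ref{lem:stale} to produce a strategy $\sigma'$ that is strictly better than $\sigma$ and that satisfies $(v,\sigma'(v)) \in E_\val^\sigma$ for every $v \in V$. This already settles the second part of the statement; what remains is to certify the improvement by a single base change, i.e.\ to locate the promised neighbouring valuation $\val''$.

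The key observation is that $\sigma$ and $\sigma'$ agree on the values of their objective at~$\val$. Indeed, every edge $(v,\sigma'(v))$ lies in $E_\val^\sigma \subseteq S_\val^\sigma$, so by the very definition of stale edges we have $\off(\val,(v,\sigma'(v))) = \off(\val,(v,\sigma(v)))$ for each $v \in V$, and summing over $V$ yields $f_{\sigma'}(\val) = f_\sigma(\val)$. On the other hand, because $\sigma'$ is better than $\sigma$, the optimum of $\mathsf{LinearProgramming}(H, f_{\sigma'})$ is strictly smaller than $f_\sigma(\val)$, hence strictly smaller than $f_{\sigma'}(\val)$. Thus $\val$, which is a solution of $H$ defined by a base (by the choice of the LP solver at Line~5), is \emph{not} optimal for $f_{\sigma'}$ under $H$.

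At this point I would apply the improving hypothesis directly: since $\mathcal G$ is an improving DPG, $\sigma'$ is a strategy for both players, and $\val$ is a non-optimal basic solution for the objective $f_{\sigma'}$, there is a single base change leading to a neighbouring valuation $\val''$ with $f_{\sigma'}(\val'') < f_{\sigma'}(\val) = f_\sigma(\val)$. This is the required $\val''$. The fact that $\sigma'$ is better than $\sigma$ is already guaranteed by Lemma~\ref{lem:stale}, and the edge constraint $(v,\sigma'(v)) \in E_\val^\sigma$ is inherited from the same lemma.

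The only real obstacle is the second paragraph: showing that $\val$ is not optimal for $f_{\sigma'}$ so that the improving hypothesis can be triggered. Once the stale-edge identity $f_{\sigma'}(\val) = f_\sigma(\val)$ is noticed, the rest reduces to bookkeeping, because the improving hypothesis is explicitly tailored to convert ``$\val$ is a non-optimal basic solution'' into ``a single base change strictly improves''. No new quantitative estimate is needed beyond those already used to prove Lemma~\ref{lem:stale}.
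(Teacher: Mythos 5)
Your proposal is correct and follows essentially the same route as the paper: invoke Lemma~\ref{lem:stale} to obtain $\sigma'$, then use the improving property to extract a single base change that already beats $f_\sigma(\val)$. The paper phrases the last step as walking a monotone simplex path from $\val$ to the optimum of $f_{\sigma'}$ and cutting it off after one step, whereas you invoke the improving hypothesis directly after making explicit the stale-edge identity $f_{\sigma'}(\val)=f_\sigma(\val)$ --- a point the paper's proof uses only implicitly --- but the underlying argument is the same.
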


\begin{proof}
By Lemma~\ref{lem:stale}, we can obtain a better strategy $\sigma'$ from transitions in $E_\val^\sigma$; for $\sigma'$ therefore $f_{\sigma'}(\val)=f_\sigma(\val)$ holds, and $\val$ is not optimal for $f_{\sigma'}$.

Let $\mathbf{b}$ be a basis that defines $\val$. 
As $\mathcal G$ is improving, it is $\sigma'$-improving, so that there is a basis $\mathbf{b}'$ that neighbours $\mathbf{b}$ and defines a solution $\val'$ with $f_{\sigma'}(\val')<f_{\sigma'}(\val)=f_\sigma(\val)$.
\end{proof}

We observe that any set $E^\sigma_\val$ can be selected, including the set $E_\val \cup \{(v,\sigma(v)\mid v \in V\}$, to select a better strategy $\sigma'$ from.

While we can restrict the selection of $\sigma'$ to the strategies that comply with the restriction $(v,\sigma'(v)) \in E_\val^\sigma$, there is no particular reason for doing so; as soon as we have a neighbouring solution $\val'$, we can identify a pair of strategies $\sigma'$ for which $f_{\sigma'}(\val')$ is minimal and select $\sigma'$ if $f_{\sigma'}(\val')< f_\sigma(\val)$ holds.

\subsection{Making Games Sharp (and Thus Improving)}\label{ssec:sharpeing}

Naturally, not every game is sharp.
In this subsection, we 
discuss how to almost surely make games sharp by adding sufficiently small random noise to the edge weights.
Note that these are `global' adjustments of the game that only need to be applied once, as it is the game that becomes sharp.

We first create notation for expressing how much we can change edge weights by adding small noise, such that joint co-optimal strategies of the resulting game are joint co-optimal strategies in the original game.
To this end, we define the \emph{gap} of a game.

\subsection*{Gap of a game}
Before defining the gap of the game, we recall that $\val_{\sigma}$ is the value of the joint strategy $\sigma$ (whereas $\val$ denotes the outcome of the optimisation problem).
We use this solution to argue that a small distortion of the edge weights cannot turn non-co-optimal strategies into co-optimal ones, and we define the gap of the game to reason about how small is small enough to retain non-co-optimality.
The value of a joint strategy itself is useful for this because it is much easier to access than the result of optimisation.
As a valuation of a non-co-optimal joint strategy is not a solution of a game, we will encounter negative offsets: they occur for exactly those edges, for which the associated inequation is not satisfied by $\val_{\sigma}$.

\begin{defi}[Contraction and Gap of a Game]
\label{def:gap}
Given a game $\mathcal{G}$, we call its \emph{contraction} $\lambda^*=\max\{\lambda_e\mid e \in E\}$.
For a non-co-optimal joint strategy $\sigma$, we call the \emph{gap} of $\sigma$ for $\mathcal{G}$ the value $\gamma_\sigma = \min \{-\off(\val_{\sigma},e) \mid e \in E,\off(\val_{\sigma},e)<0\}$.
We call the \emph{gap} of $\mathcal G$ the value $\gamma = \min\{\gamma_\sigma \mid \sigma$ is a non-co-optimal joint strategy$\}$, i.e.\ the minimal such value. 
\end{defi}

Note that $\gamma_\sigma >0$ always holds for non-co-optimal strategies; for the minimal%
\footnote{This skips over the case where all strategies are co-optimal, but that case is trivial to check and such games are trivial to solve, so that we ignore this case in this subsection.}
value $\gamma$, $\gamma >0$ thus always holds.

In a strategy improvement approach, where there is an edge $e$ with $-\off(\val_{\sigma},e)>0$, this would constitute a profitable switch, allowing the player to improve their return.
While we do not use it this way, we argue that the gap $\gamma$ allows us to infer that, when we change all edge weights by a sufficiently small factor (which we can derive from $\gamma_\sigma$, or $\gamma$, and the contraction of the game), a non-co-optimal strategy $\sigma$ is still non-co-optimal after adjusting the weights this way.

We now use the gap of a game $\gamma$ to define the magnitude of a change to all weights, such that all strategies that used to have a gap still have one, and thus that all co-optimal joint strategies from the distorted game are also co-optimal joint strategies in the undistorted game.

\begin{lem}\label{lem9}
Let $\mathcal G = (V_{\min}, V_{\max}, E, w, \lambda)$ be a DPG with contraction $\lambda^*$ and gap $\gamma$, and let $\mathcal G' = (V_{\min}, V_{\max}, E, w', \lambda)$ differ from $\mathcal G$ only in the edge weights such that, for all $e \in E$, $|w_e - w_e'| < \frac{1-\lambda^*}{2}\gamma$ holds.
Then any joint co-optimal strategy from $\mathcal G'$ is also co-optimal for $\mathcal G$.
\end{lem}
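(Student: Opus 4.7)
The plan is to argue by contrapositive: any joint strategy $\sigma$ that is \emph{not} co-optimal in $\mathcal G$ cannot be co-optimal in $\mathcal G'$. Co-optimality of $\sigma$ in a game is equivalent to $\off(\val(\sigma),e)\geq 0$ for every edge $e$: the offsets of the edges $(v,\sigma(v))$ are automatically $0$ (since $\val(\sigma)(v)=w_{(v,\sigma(v))}+\lambda_{(v,\sigma(v))}\val(\sigma)(\sigma(v))$ holds by definition of $\val(\sigma)$), so co-optimality reduces to $\val(\sigma)$ satisfying the remaining inequations of $H$, which is precisely nonnegativity of the remaining offsets. This reformulation lets the gap $\gamma$ of $\mathcal G$ enter directly: a non-co-optimal $\sigma$ must have some edge $e$ with $\off_{\mathcal G}(\val_{\mathcal G}(\sigma),e)\leq -\gamma$.

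The first step is to bound how far $\val(\sigma)$ can move when we pass from $\mathcal G$ to $\mathcal G'$. Since $\val(\sigma)(v)=\sum_{i\geq 0} w_{e_i}\prod_{j<i}\lambda_{e_j}$ along the deterministic play from $v$ under $\sigma$, replacing each $w_{e_i}$ by a weight that differs by at most $\delta=\tfrac{1-\lambda^*}{3}\gamma$ changes the sum by at most
$$\delta\sum_{i\geq 0}(\lambda^*)^i \;=\; \frac{\delta}{1-\lambda^*} \;=\; \frac{\gamma}{3}.$$
(Equivalently, one obtains the same bound from the contraction of the fixed-point equations $\val(v)=w_{(v,\sigma(v))}+\lambda_{(v,\sigma(v))}\val(\sigma(v))$.) Thus $|\val_{\mathcal G}(\sigma)(v)-\val_{\mathcal G'}(\sigma)(v)|\leq \gamma/3$ at every vertex $v$.

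The second step propagates this into the offsets. For a fixed edge $e=(v,v')$, the offset is a signed linear combination of $\val(v)$, $w_e$, and $\lambda_e\val(v')$, so the triangle inequality gives
$$\bigl|\off_{\mathcal G}(\val_{\mathcal G}(\sigma),e) - \off_{\mathcal G'}(\val_{\mathcal G'}(\sigma),e)\bigr| \;\leq\; \frac{\gamma}{3}+\delta+\lambda^*\cdot\frac{\gamma}{3} \;=\; \frac{\gamma}{3}+\frac{(1-\lambda^*)\gamma}{3}+\frac{\lambda^*\gamma}{3} \;=\; \frac{2\gamma}{3}.$$
Applied to the witness edge $e$ supplied by the definition of $\gamma$, this yields
$$\off_{\mathcal G'}(\val_{\mathcal G'}(\sigma),e) \;\leq\; -\gamma + \frac{2\gamma}{3} \;=\; -\frac{\gamma}{3} \;<\; 0,$$
so $\sigma$ is non-co-optimal in $\mathcal G'$ as well, completing the contrapositive.

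I expect the only nuisance to be bookkeeping: stating cleanly the equivalence between co-optimality and nonnegativity of all offsets of $\val(\sigma)$, and dismissing the degenerate case ignored in the footnote where every strategy is co-optimal (in which case the lemma is trivial because any $\sigma$ works in either game). Once those are handled, the proof is just the two telescoping estimates above, and the constant $\tfrac{1-\lambda^*}{3}$ is chosen precisely so that the gap $\gamma$ absorbs both the direct weight perturbation and the propagated change of the fixed-point with room to spare.
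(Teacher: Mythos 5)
Your proposal is correct and follows essentially the same route as the paper: a contrapositive argument that bounds the valuation shift by $\gamma/3$ via the discounted sum, then propagates this through the offset of the witness edge by the triangle inequality. Your combined bound of $\tfrac{2\gamma}{3}$ is marginally tighter than the paper's $\tfrac{2+\lambda^*}{3}\gamma$ (you keep the exact $\delta$ rather than bounding it by $\gamma/3$), but both suffice and the argument is the same.
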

\begin{proof}
We argue that the small weight disturbance, $|w_e - w_e'| < \frac{1-\lambda^*}{2}\gamma$ for all $e \in E$, provides a small difference in the value that does not affect co-optimality. By using the definition of $\val_{\sigma}$ and the triangle inequality we can estimate the difference between the values. Precisely, for all joint strategies $\sigma$, we have for $\val_{\sigma}$ on $\mathcal G$, and $\val_{\sigma}'$ on $\mathcal G'$, that $|\val_{\sigma}(v) - \val_{\sigma}'(v)| < \frac{1}{1-\lambda^*} \frac{1-\lambda^*}{2}\gamma = \frac{\gamma}{2}$.
Indeed, $\sigma$ defines a play $\rho = e_0 e_1 e_2 \ldots$, and we have $\val_{\sigma}(v) = \out(\rho)=\sum_{i=0}^{\infty} w_{e_i} \prod_{j=0}^{i-1}\lambda_{e_j}$
and $\val_{\sigma}'(v)=\sum_{i=0}^{\infty} w_{e_i}' \prod_{j=0}^{i-1}\lambda_{e_j}$.
This provides 
$$|\val_{\sigma}(v) - \val_{\sigma}'(v)| \leq \sum_{i=0}^{\infty} |w_{e_i}-w_{e_i}'| \prod_{j=0}^{i-1}\lambda_{e_j} < \frac{1-\lambda^*}{2}\gamma \sum_{i=0}^{\infty}\prod_{j=0}^{i-1}\lambda_{e_j} \leq \frac{1-\lambda^*}{2}\gamma \sum_{i=0}^{\infty} (\lambda^*)^i = \frac{\gamma}{2}. $$

If we assume that $\sigma$ is not co-optimal for $\mathcal G$, then we have an edge $e = (v,v')$ with $-\off(\val_{\sigma},e)=\gamma_\sigma \geq \gamma$.
Using $\off'$ to indicate the use of $w_e'$ for $\mathcal G'$, by applying the triangle inequality we get

\[\begin{array}{cl}
|\off'(\val_{\sigma}',e)-\off(\val_{\sigma},e)| \!\!\!\!& \leq |\val_{\sigma}'(v)-\val_{\sigma}(v)|+|w_e - w_e'| + \lambda_e|\val_{\sigma}(v')-\val_{\sigma}'(v')|  \\
& < \frac{\gamma}{2} + \frac{1-\lambda^*}{2}\gamma + \lambda_e \frac{\gamma}{2} \leq \gamma . \\
\end{array}\]

Together with the fact that $-\off(\val_{\sigma},e) \geq \gamma$, this provides $\off'(\val_{\sigma}',e)<0$.

Thus, $\sigma$ is not co-optimal for $\mathcal G'$ either.
\end{proof}

\begin{lem}\label{l10}
Given a DPGs $\mathcal{G}= (V_{\min}, V_{\max}, E, w, \lambda)$, the DPG $\mathcal G'= (V_{\min}, V_{\max}, E, w', \lambda)$ resulting from $\mathcal G$ by adding independently uniformly at random drawn values from an interval $(-\varepsilon,\varepsilon)$ to every edge weight, will almost surely result in a sharp game.
\end{lem}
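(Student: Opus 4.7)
The plan is to translate ``$\mathcal G'$ is sharp'' into non-degeneracy of the polytope $P' = \{\val \in \mathbb R^{|V|} : \val \text{ satisfies } H'\}$: every vertex of $P'$ is the intersection of exactly $|V|$ tight constraints. Since the simplex method can return any vertex of $P'$ depending on the chosen objective, universal sharpness at simplex solutions is equivalent to polytope non-degeneracy. Failure of sharpness therefore means that some subset $S \subseteq E$ of size $|V|{+}1$ admits a common sharp solution, and I would show that for each such $S$ this event has probability $0$, then union-bound over the finitely many choices.

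Fix $S \subseteq E$ with $|S| = |V|{+}1$. Making each $I_e$ with $e = (v,v') \in S$ sharp reduces---regardless of player ownership, since the inequality sign flips but disappears upon sharpening---to the equation $\val(v) - \lambda_e \val(v') = w'_e$. Stacking these gives a linear system $A\val = b'_S$, with matrix $A \in \mathbb R^{(|V|+1) \times |V|}$ depending only on the graph structure and the discount factors (entries lie in $\{0, 1, -\lambda_e, 1{-}\lambda_e\}$), and right-hand side $b'_S = (w'_e)_{e \in S}$. The key observation is that $A$ is untouched by the perturbation; only $b'_S$ is random. Since $A$ has only $|V|$ columns, its image $\mathrm{im}(A) \subseteq \mathbb R^{|V|+1}$ is a linear subspace of dimension at most $|V|$, hence a proper subspace and a Lebesgue null set.

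The randomness then finishes the argument: $b'_S = b_S + (\delta_e)_{e \in S}$ is a translate of an independent uniform variable on $(-\varepsilon, \varepsilon)^{|V|+1}$, so it has an absolutely continuous distribution on $\mathbb R^{|V|+1}$ and $\Pr[b'_S \in \mathrm{im}(A)] = 0$. A union bound over the $\binom{|E|}{|V|+1}$ subsets $S$ yields almost-sure sharpness of $\mathcal G'$. The main delicate point is the first step---pinning down the equivalence between the paper's simplex-oriented notion of ``sharp'' and polytope non-degeneracy; once that is in hand, the rank of $A$ is immaterial (whether full or deficient, $\mathrm{im}(A)$ sits in a proper subspace of $\mathbb R^{|V|+1}$) and the null-set argument is entirely standard.
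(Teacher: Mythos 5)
Your proof is correct, but it takes a genuinely different route from the paper's. You reduce failure of sharpness to the existence of a set $S$ of $|V|{+}1$ simultaneously tight constraints, observe that sharpening the constraints of $S$ yields an overdetermined system $A\val = b'_S$ whose coefficient matrix $A$ is unaffected by the perturbation, and conclude that solvability forces the absolutely continuous vector $b'_S$ into the at most $|V|$-dimensional subspace $\mathrm{im}(A)$, a Lebesgue-null event; a union bound over the finitely many $S$ finishes. The paper instead enumerates pairs of distinct bases $b_1 \neq b_2$, picks an edge $e \in b_1 \setminus b_2$, defers the draw of its perturbation to last, and notes that once $\val(b_2)$ is fixed there is exactly one value of $w'_e$ making $\val(b_1) = \val(b_2)$, which is hit with probability $0$. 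The two arguments buy slightly different things: the paper's deferred-decision argument is more elementary (it only needs that a continuous distribution assigns probability $0$ to a single point, and matches the style reused in Theorem~\ref{t13}), while yours proves the stronger general-position statement that \emph{no} point of $\mathbb{R}^{|V|}$ satisfies $|V|{+}1$ constraints with equality, which subsumes the paper's claim that distinct bases define distinct valuations (if an extra constraint were tight at $\val(b_1)$, its nonzero row could be exchanged into $b_1$ to produce a second base with the same valuation). The equivalence with the paper's base-oriented phrasing of sharpness, which you rightly flag as the delicate point, only needs the easy direction (not sharp implies $|V|{+}1$ tight constraints), so your argument goes through as stated.
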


Note that these distributions are continuous.
They could be replaced by any other distribution over these intervals that has weight $0$ for all individual points.

\begin{proof}
In this proof, we show a stronger property: every two different bases almost surely define different valuations (regardless of whether or not these valuations are solutions).

To this end, we can select two arbitrary different sets of $|V|$ edges that define potential bases $\mathbf b_1$ and $\mathbf b_2$; they can be selected before drawing $\mathcal G'$.

What we show is that it happens with probability $0$ that $\mathbf b_1$ and $\mathbf b_2$ are bases \emph{and} define the same valuations (regardless of whether or not these valuations are solutions) in $\mathcal G$.
If $\mathbf b_1$ or $\mathbf b_2$ is not a basis (which would happen if the inequations are not linearly independent, e.g.\ if a vertex $v$ does not appear at all in these inequations) we are done, too.

As $\mathbf b_1$ and $\mathbf b_2$ are different, there will be one inequation that corresponds to an edge $e = (v,v')$ that occurs in $\mathbf b_2$, but not in $\mathbf b_1$.
As all weight disturbances are drawn independently, we assume without loss of generality that the weight disturbance to this edge is drawn last.

Now, the valuation $\val_1$ defined by $\mathbf b_1$ does not depend on this final draw.
For $\val_1$, there is a value $w_e' = \val_1(v)-\lambda_e\val_1(v')$ that defines the weight $w_e'$ that $e$ would need to have, in order to make the inequation sharp for $e$.

For the valuation $\val_1$ defined by $\mathbf b_1$ to be equal to the valuation $\val_2$ defined by $\mathbf b_2$, the weight for the edge $e$ (after adding the drawn distortion) needs to be exactly $w_e'$. (This is a necessary condition, but not necessarily a sufficient condition.)
There is at most one value for the disturbance that would provide for this, and this disturbance for the weight of $e$ is sampled with a likelihood of $0$.
\end{proof}

Putting these two results together, we get the following:

\begin{cor}
Given a pair of DPGs $\mathcal G = (V_{\min}, V_{\max}, E, w, \lambda)$ with contraction $\lambda^*$ and gap $\gamma$, and $\mathcal G' = (V_{\min}, V_{\max}, E, w', \lambda)$ obtained from $\mathcal G$ by adding independently uniformly at random drawn values from an interval $(-\varepsilon,\varepsilon)$ to every edge weight, for some $\varepsilon < \frac{1-\lambda^*}{2}\gamma$, it holds that any joint co-optimal strategy from $\mathcal G'$ is also co-optimal for $\mathcal G$, and $\mathcal G'$ is almost surely sharp.
\end{cor}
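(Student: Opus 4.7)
The plan is to simply combine Lemma~\ref{lem9} and Lemma~\ref{l10}, which together already do essentially all the work; the corollary is a packaging statement rather than something requiring new ideas.

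First, I would observe that because each disturbance is drawn from the open interval $(-\varepsilon, \varepsilon)$, the resulting perturbed weights satisfy $|w_e - w_e'| < \varepsilon$ for every edge $e \in E$. Combined with the hypothesis $\varepsilon \leq \frac{1-\lambda^*}{3}\gamma$, this gives $|w_e - w_e'| \leq \frac{1-\lambda^*}{3}\gamma$, which is exactly the premise required by Lemma~\ref{lem9}. Applying Lemma~\ref{lem9} then yields the first conclusion: any joint co-optimal strategy of $\mathcal G'$ is also co-optimal for $\mathcal G$. Note that this part of the argument is deterministic — it holds for \emph{every} realisation of the random draw, not merely almost surely — so it can be stated unconditionally under the event that the draws lie in $(-\varepsilon, \varepsilon)$, which is a probability-one event by construction.

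Next, I would invoke Lemma~\ref{l10} directly on $\mathcal G'$: the lemma's hypothesis (that the weights of $\mathcal G'$ are obtained from $\mathcal G$ by adding independently uniformly random disturbances from an interval $(-\varepsilon, \varepsilon)$) is exactly the construction of $\mathcal G'$ in the corollary, so its conclusion — that $\mathcal G'$ is almost surely sharp — transfers immediately.

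The two conclusions concern the same $\mathcal G'$, so they hold jointly on the intersection of the two events. Since the first event has probability one and the second event has probability one, the intersection still has probability one. There is no genuine obstacle here; the only thing worth spelling out is the slightly pedantic point that the inequality from Lemma~\ref{lem9} is $\leq$ while samples from the open interval give strict inequality, so the hypothesis of Lemma~\ref{lem9} is comfortably satisfied for every realisation. I would finish with a one-sentence conclusion noting that both properties therefore hold simultaneously with probability one, which is what the corollary asserts.
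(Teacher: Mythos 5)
Your proposal is correct and follows exactly the route the paper intends: the corollary is stated immediately after Lemma~\ref{lem9} and Lemma~\ref{l10} with the remark ``putting these two results together,'' and no further argument is given. Your only additions --- noting that the open interval gives strict inequality so the hypothesis of Lemma~\ref{lem9} holds for every realisation, and that the conjunction of a sure event with an almost-sure event is almost sure --- are harmless elaborations of the same combination.
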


Note that we can estimate the gap cheaply when all coefficients in $\mathcal G$ are rational.
To estimate the gap
$$\gamma = \min \{-\off(\val_{\sigma},e) \mid \sigma \mbox{ is a non-co-optimal joint strategy},e \in E,\off(\val_{\sigma},e)<0\}\ ,$$
we first fix a joint non-co-optimal strategy $\sigma$ and an edge $e=(v,w)$ with $\off(\val_{\sigma},e)<0$.

Starting in $v$, $\sigma$ defines a run $\rho =  e_1 e_2 e_3 \ldots$ in the form of a ``lasso path", which consists of a (possibly empty) initial path $e_1,\ldots,e_k$, followed by an infinitely often repeated cycle $e_1',\ldots,e_\ell'$, where all edges occur only once.

In this run, an edge $e_i$ occurs only once and contributes $(\prod_{j=1}^{i-1}\lambda_{e_j}) w_{e_i}$ to the value of this run, while an edge $e_i'$ occurs infinitely often and contributes the value $\frac{(\prod_{j=1}^{k}\lambda_{e_j}) \cdot (\prod_{j=1}^{i-1}\lambda_{e_j'})\cdot w_{e_i'}}{1-\prod_{j=1}^{\ell}\lambda_{e_j'}}$ to the value of the run.
Now all we need to do is to estimate a common denominator.

To do this, let $\mathsf{nom}(r)$ and $\denom(r)$ be the nominator and denominator of a rational number $r$.
It is easy to see that 
$$\mathsf{common'} = \prod\limits_{v\in V}\denom(\lambda_{(v,\sigma(v))}) \cdot \denom(w_{(v,\sigma(v))})$$
is a common denominator of all expressions of the first type, ($\prod_{j=1}^{i-1}\lambda_{e_j})w_{e_i}$, and of the nominator from the second, $(\prod_{j=1}^{k}\lambda_{e_j}) \cdot (\prod_{j=1}^{i-1}\lambda_{e_j'})\cdot w_{e_i'}$.

This leaves the contribution of the denominator of the fraction from the second form, which is the nominator of the term $1-\prod_{j=1}^{\ell}\lambda_{e_j'}$.
But since the term is between $0$ and $1$, its value is strictly smaller than the value of its denominator; we thus have for
$$n_1 =\mathsf{nom}(1-\prod_{j=1}^{\ell}\lambda_{e_j'})
< \prod_{j=1}^\ell \denom(\lambda_{e_j'})\leq \prod\limits_{v\in V}\denom(\lambda_{(v,\sigma(v))})$$
that
$$\mathsf{common'}\cdot n_1
\mbox{ is a denominator for }
\out(\rho)\ .$$

To obtain $-\off(\val_{\sigma},e)$, we have to compare with a run starting in $v$, but taking the edge $e \neq (v,\sigma(v))$ first.

This defines a run $\rho' = d_1 d_2 d_3 \ldots$ (with $d_1 = e$), again in the form of a ``lasso path" that consists of an initial path $d_1,\ldots,d_{k'}$, followed by an infinitely often repeated cycle $d_1',\ldots,d_{\ell'}'$, where every edge on the path $d_1,\ldots,d_{k'},d_1',\ldots,d_{\ell'}'$ occurs only once.

In this run, an edge $d_i$ occurs only once and contributes $(\prod_{j=1}^{i-1}\lambda_{d_j}) w_{d_i}$ to the value of this run, while an edge $d_i'$ occurs infinitely often and contributes the value $\frac{(\prod_{j=1}^{k'}\lambda_{d_j}) \cdot (\prod_{j=1}^{i-1}\lambda_{d_j'})\cdot w_{d_i'}}{1-\prod_{j=1}^{{\ell'}}\lambda_{d_j'}}$ to the value of the run.

It is easy to see that 
$$\mathsf{common'} \cdot \denom(\lambda_e) \cdot \denom(w_e)$$
is a common denominator of all expressions of the first type, ($\prod_{j=1}^{i-1}\lambda_{d_j})w_{d_i}$, and of the nominator from the second, $(\prod_{j=1}^{k'}\lambda_{d_j}) \cdot (\prod_{j=1}^{i-1}\lambda_{d_j'})\cdot w_{d_i'}$.

This again leaves the contribution of the denominator of the fraction from the second form, which is the nominator of the term $1-\prod_{j=1}^{{\ell'}}\lambda_{d_j'}$.
Again a term between $0$ and $1$, its value is strictly smaller than the value of its denominator; we thus have for
$$n_2 = \mathsf{nom}(1-\prod_{j=1}^{{\ell'}}\lambda_{d_j'})
< \prod_{j=1}^{\ell'} \denom(\lambda_{d_j'})\leq \prod\limits_{v\in V}\denom(\lambda_{(v,\sigma(v))})$$
that
$$\mathsf{common'}\cdot \denom(\lambda_e) \cdot \denom(w_e) \cdot n_2
\mbox{ is a denominator for }
\out(\rho')\ .$$
Consequently,
$$\mathsf{common'}\cdot \denom(\lambda_e) \cdot \denom(w_e) \cdot n_1 \cdot n_2
\mbox{ is a denominator for }
\off(\val_{\sigma},e)\ .$$

Putting the estimates together, we get that
$$\mathsf{bound} = \denom(\lambda_e) \cdot \denom(w_e) \cdot \prod\limits_{v\in V}\denom(\lambda_{(v,\sigma(v))})^3 \cdot \denom(w_{(v,\sigma(v))})$$
is smaller than the smallest denominator of $\off(\val_{\sigma},e)$.
We then use this to estimate $\gamma \geq -\off(\val_{\sigma},e) > 1/\mathsf{bound}$.

This value is easily estimated by using the highest possible denominators available in $\mathcal G$---maximising $\denom(\lambda_e) \cdot \denom(w_e)$ over all edges $e \in E$ and maximising $\denom(\lambda_{(v,\sigma(v))})^3 \cdot \denom(w_{(v,\sigma(v))})$ individually for every vertex $v \in V$.
Its representation is polynomial in the size of $\mathcal G$.

Thus, we can almost surely obtain sharpness by adding small noise to the weights, and the resulting sharp games are also improving by Theorem \ref{thm:sharpImprove}.
This guarantees cheap progress for the case where there are no local improvements.

\subsection{Mixing Pivoting on the Simplex and of the Objective}
\label{ssec:mixing}

When using a simplex based technique to implement $\mathsf{LinearProgramming}$ (Line 5 of Algorithm~\ref{alg:oi_alg}), then the algorithm mixes three
approaches that stepwise reduce the value of $f_\sigma(\val)$:
\begin{enumerate}
    \item The simplex algorithm updates the basis, changing $\val$ (while retaining the objective function $f_\sigma$).
    \item Local updates, that change the objective function $f_\sigma$ (through updating $\sigma$)
    and retain $\val$.
    \item Non-local updates.
\end{enumerate}

Non-local updates are more complex than the other two, and the correctness proofs make use of the optimality (w.r.t.\ $f_\sigma$) of the current solution. For both reasons, it seems natural to take non-local updates as a last resort.

The other two updates, however, can be freely mixed, as they both bring down the value of $f_\sigma(\val)$ by applying local changes.
That the improvements from (1) are given preference in the algorithm is a choice made to keep the implementation of the algorithm for using linear programs open, allowing, for example, to use ellipsoid methods \cite{Kha79} or inner point methods \cite{Kar84} to keep this step tractable.

\section{Experimental Evaluation}\label{sec:exp}

To evaluate the performance of the novel approach, we have implemented it in an experimental framework in C++. Together with the objective improvement algorithm (OI, for short), we have also implemented the classic asymmetric variant of the strategy improvement algorithm (SI, for short) that has been described in Section~\ref{sec:outline}.
Note that our implementation uses floating-point numbers with double precision (type \texttt{double}), while the approach itself assumes infinite precision. As a consequence, match checks are not possible and the solutions can be affected by an $\epsilon$ error due to the precision of the library (that comes from the external linear programming solver too).
For most parts, this could be rectified by using fractions. However, while fractions with infinite precision are available with the right libraries, their use would lead to an overhead and would not solve the precision problem of the external linear programming solver.
We also note that the small noise added in Section \ref{ssec:sharpeing} to almost surely make a game sharp and improving requires a continuous value space.

Evaluation comes with a number of difficulties. First, there are no actual concrete benchmarks for DPGs.
We have therefore translated parity games that model synthesis problems into DPGs. The translation process is exponential in the number of priorities of the parity games. For this reason we could translate and test only two types of concrete games: the Elevator and Language Inclusion. 
Besides these concrete games, we have also generated random games of various sizes.
We have considered games with rational weights in the interval $[-250,250]$ on the edges, as games with integer weights represent a restricted class of games.
The range of rational weights is not crucial as it suffice to multiply their values to scale the interval up or down.
Moreover, we use a uniform discount factor of $0.9$, since for games in which each vertex has different weights on the edges, non-uniform discount factors do not make games any easier or harder to solve.
Using a low discount factor (\eg\ 0.5 or below) make, instead, games easier to solve, because with small discount factors the solution (\ie\ the valuation $\val_{\mathcal G}$) of the vertices converges after few steps, while with a discount factor close to 1 the valuation requires many steps to converge.
Finally, it is worth noting that games with such random weights on the edges are also almost surely sharp, hence, it has not been necessarily to implement the \emph{sharpening} technique described in Section~\ref{ssec:sharpeing}.

The second obstacle is that, for an implementation, one has to choose \emph{how} to update the current ``state", both for SI and OI.
For SI, we use a greedy all switch rule: we update all local decisions where there is a local improvement available, and where there are multiple, we choose one with the maximal local improvement.
For OI, there are more choices to make.
As mentioned in Section \ref{ssec:mixing}, in simplex-based approaches, we can freely mix (1) the standard updates of the basis for a fixed objective function, and (2) updates of the objective function itself, and in most steps, both options are available.
For (1), there are a plethora of update rules for the simplex, and for (2), there is also a range of ways of how to select the update.
We have dodged this question by not directly implementing a solver, but by, instead, solving the linear programs with an off-the-shelf solver. We have used the LP solver provided by the ALGLIB library\footnote{Library libalglib version 3.16.0} for both OI and SI.
This does not restrict us to the use of simplex-based solvers, while it would restrict simplex-based solutions to always giving priority to (1) over (2).
Where the solver does provide a solution for a given objective function, we use rule (2) with a greedy all switch rule.

One cost of doing this is that we do not have a similarly straightforward integration of (3) non-local updates, where we are in the optimal solution for an objective function that cannot be improved with rule (2).
Instead of using the method from Theorem \ref{theo:improve} to slightly adjust a simplex method to find an improvement for improving games (knowing that the games we create are almost surely improving), we use the general rule from Lemma \ref{lem:stale}, trying random stale strategies.
In principle, this might lead to large plateaux that are hard to exit, but we considered this very unlikely, at least for the improving games we create, and our experiments have confirmed this.

This setup allows us to fairly compare OI with SI by counting the number of linear program instances that need to be solved on the way.

\begin{figure}
    \begin{tikzpicture}[scale=1.2,every node/.style={scale=0.85}]
      \begin{axis}[
        xmin=100, ymin=0,
        ymajorgrids=true,
        xmajorgrids=true,
        ytick={0, 5, 10, 15, 20, 25, 30, 35, 40},
        xtick={100, 200, 300, 400, 500, 600, 700, 800, 900, 1000},
        enlarge x limits=false,
        enlarge y limits=upper,
        mark size=1.2pt,
        ylabel={Number of iterations (avg)},
        xlabel={Game size},
        yticklabel style = {font=\tiny,xshift=-0.5ex},
        xticklabel style = {font=\tiny,yshift=-0.5ex},
        legend style={nodes={scale=0.6, transform shape}},
        legend pos=north west
        ]
        \addplot[color=red, mark=o] table [x=RUN, y=DSI_i] {results2.txt};
        \addlegendentry{SI}
    \addplot[color=blue, mark=star, mark size=1.4] table [x=RUN, y=OI_i] {results2.txt};
        \addlegendentry{OI}
        \end{axis}
    \end{tikzpicture}
    \centering
    \caption{Comparisons of the average number of iterations (LP calls) on games with two successors for each vertex.}
    \label{fig:restwoA}
\end{figure}
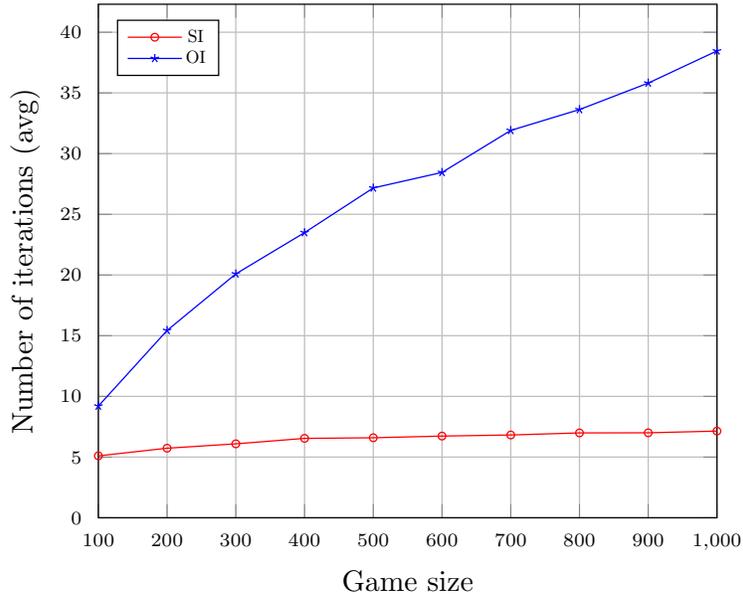

As our first benchmark set we selected games with two outgoing edges per vertex.
This type of games have a relatively small strategy space, which should make them a relatively easy target for SI in that SI should require fewer steps to identify the optimal strategy than OI.
Figures~\ref{fig:restwoA} and~\ref{fig:restwoB} show the number of iterations and strategy updates required to solve a game of size ranging from 100 to 1000 vertices. We selected 1000 games, each of the ten points in the graph shows the mean value for a cluster of 100 games of the corresponding size.

The number of iterations (Figure~\ref{fig:restwoA}) represents how many times the linear programming solver has been called.
As expected, SI is more efficient than OI at solving these games in terms of LP calls.
What was less expected is how the advantage of SI over OI grows with the size of the game. Solving the linear programs, instead, is simpler for OI, at least when using a simplex style algorithm%
\footnote{Simplex style algorithms 
operate in two phases: in the first phase, they look for any basis that defines a solution to the constraint system (Line 1), and in the second, they optimise (while loop).
For OI, the first phase can be skipped for all but the first call; this is because the constraint system does not change, so that the previous solution can be used as a starting point.
For SI, the constraint system changes; here, the previous solution is never a solution to the new linear program.}.
We also note that only 55 (5.5\%) of the 1000 games considered overall required a non-local improvement step.
In these rare cases, this has been solved by choosing a stale switch (i.e.\ a switch of strategy that leads to the same solution).

The number of updates, instead, counts how many times each vertex changed its strategy before the next call to the LP solver, and then provides the sum of these switches. 
If the ownership of the vertices of a game is not balanced (\eg\ one player owns 80\% of the vertices), most of the solving effort is charged to the LP solver, while the SI algorithm itself needs to update very few strategies.
On the contrary, OI updates the strategies for both the players.
It would therefore stand to reason to expect that the number of local strategy updates would give an extra advantage to SI.
This is, however, not the case: Figure \ref{fig:restwoB} shows that, while OI needs to update the strategy for at least twice as many vertices, it needs only around a third more local updates for this.

\begin{figure}
    \begin{tikzpicture}[scale=1.2,every node/.style={scale=0.85}]
      \begin{axis}[
        xmin=100, ymin=0,
        ymajorgrids=true,
        xmajorgrids=true,
        ytick={45, 100, 200, 300, 400, 500, 600, 700},
        xtick={100, 200, 300, 400, 500, 600, 700, 800, 900, 1000},
        y tick label style={/pgf/number format/fixed},
        enlarge x limits=false,
        enlarge y limits=upper,
        mark size=1.2pt,
        ylabel={Number of updates (avg)},
        xlabel={Game size},
        yticklabel style = {font=\tiny,yshift=-0.5ex},
        xticklabel style = {font=\tiny,yshift=-0.5ex},
        legend style={nodes={scale=0.6, transform shape}},
        legend pos=north west
        ]
        \addplot[color=red, mark=o] table [x=RUN, y=DSI_s] {results2.txt};
        \addlegendentry{SI}
    \addplot[color=blue, mark=star, mark size=1.4] table [x=RUN, y=OI_s] {results2.txt};
        \addlegendentry{OI}
        \end{axis}
    \end{tikzpicture}
    \caption{Comparisons of the average number of local strategy updates for games with two successors per vertex.}
    \label{fig:restwoB}
\end{figure}
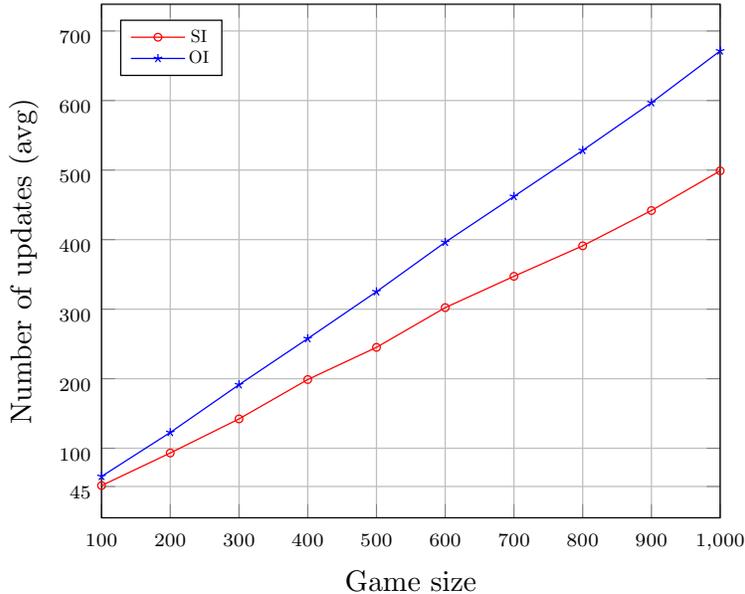

In the second benchmark set we selected games with more moves per vertex, in the range $[5,10]$. In this case, the space of strategies is much wider than one of the games of the first benchmark set, and algorithms based on SI find these games harder to solve. 
Figures~\ref{fig:resfiveA} and~\ref{fig:resfiveB} show the comparison of OI and SI on the second benchmark set of 1000 games.

In Figure~\ref{fig:resfiveA} we can observe a slowly but consistently growing gap in the number of LP calls (iterations). In addition to the advantage of simpler LP problems, we see that in these slightly more complex games OI also requires fewer calls, with SI needing some 2.5 to 3 times as many.
For this benchmark set, only 7 (0.7\%) of the games reached a non-improving state at any point.

Interestingly, Figure~\ref{fig:resfiveB}, instead, depicts a symmetric behaviour of the graph of Figure~\ref{fig:restwoB}, but this time with the lanes of the two solvers swapped. Thus, although OI considers the strategy at all vertices (instead of at most half), the sum of local improvements is also slightly smaller.

\begin{figure}
    \begin{tikzpicture}[scale=1.2,every node/.style={scale=0.85}]
      \begin{axis}[
        xmin=100, ymin=0,
        ymajorgrids=true,
        xmajorgrids=true,
        ytick={0, 1, 2, 3, 4, 5, 6, 7, 8, 9, 10, 11},
        xtick={100, 200, 300, 400, 500, 600, 700, 800, 900, 1000},
        enlarge x limits=false,
        enlarge y limits=upper,
        mark size=1.2pt,
        ylabel={Number of iterations (avg)},
        xlabel={Game size},
        yticklabel style = {font=\tiny,xshift=-0.5ex},
        xticklabel style = {font=\tiny,yshift=-0.5ex},
        legend style={nodes={scale=0.6, transform shape}},
        legend pos=north west
        ]
        \addplot[color=red, mark=o] table [x=RUN, y=DSI_i] {results5.txt};
        \addlegendentry{SI}
    \addplot[color=blue, mark=star, mark size=1.4] table [x=RUN, y=OI_i] {results5.txt};
        \addlegendentry{OI}
        \end{axis}
    \end{tikzpicture}
    \caption{Comparisons of the average number of iterations (LP calls) for games with many (5 to 10) successors per vertex.}
\label{fig:resfiveA}
\end{figure}
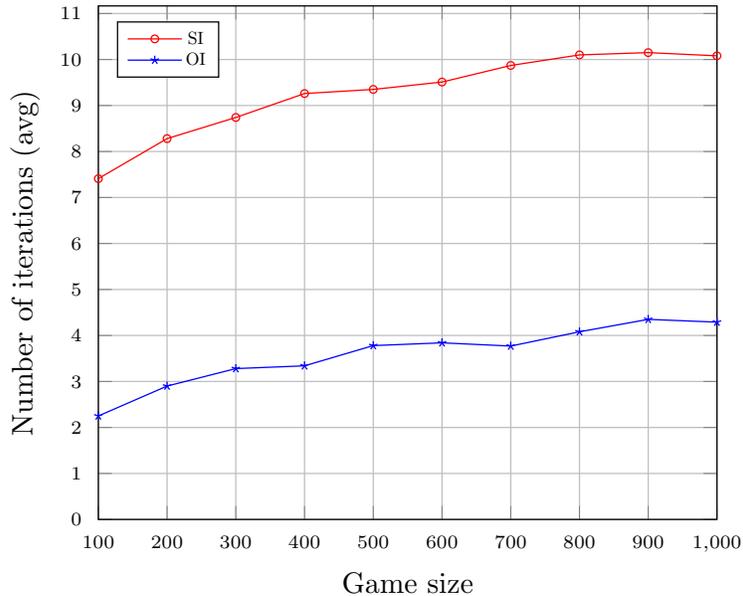

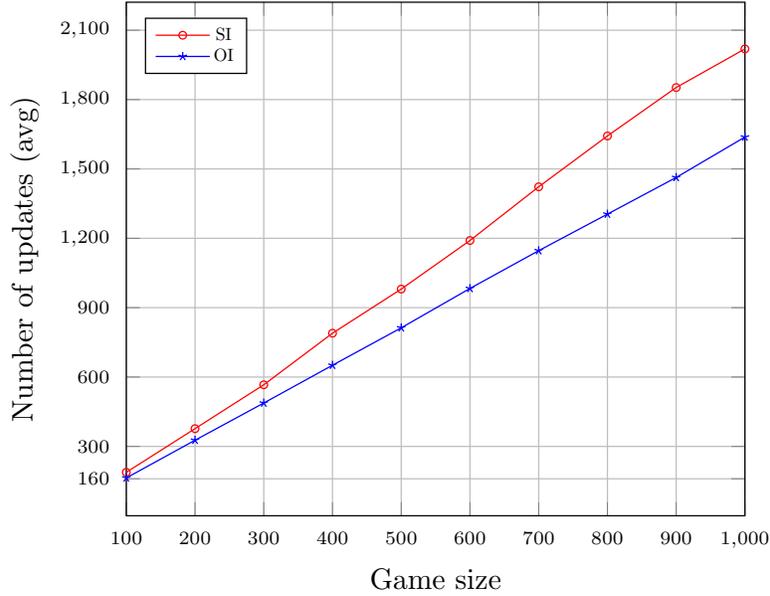
\begin{figure}
    \begin{tikzpicture}[scale=1.2,every node/.style={scale=0.85}]
      \begin{axis}[
        xmin=100, ymin=0,
        ymajorgrids=true,
        xmajorgrids=true,
        ytick={160, 300, 600, 900, 1200, 1500, 1800, 2100},
        xtick={100, 200, 300, 400, 500, 600, 700, 800, 900, 1000},
        y tick label style={/pgf/number format/fixed},
        enlarge x limits=false,
        enlarge y limits=upper,
        mark size=1.2pt,
        ylabel={Number of updates (avg)},
        xlabel={Game size},
        yticklabel style = {font=\tiny,xshift=-0.5ex},
        xticklabel style = {font=\tiny,yshift=-0.5ex},
        legend style={nodes={scale=0.6, transform shape}},
        legend pos=north west
        ]
        \addplot[color=red, mark=o] table [x=RUN, y=DSI_s] {results5.txt};
        \addlegendentry{SI}
    \addplot[color=blue, mark=star, mark size=1.4] table [x=RUN, y=OI_s] {results5.txt};
        \addlegendentry{OI}
        \end{axis}
    \end{tikzpicture}
    \caption{Comparisons of the average number of local strategy updates for games with many (5 to 10) successors per vertex.}
\label{fig:resfiveB}
\end{figure}

When considering games with many transitions, namely games where the outdegree of every vertex is 10\% of the number of vertices, the advantage of OI over SI grows further.
Figure~\ref{fig:resmoreA} shows a larger gap in the number of LP calls than Figure~\ref{fig:resfiveA}, and
Figure~\ref{fig:resmoreB} shows not only an advantage of OI over SI, but it also appears that the number of local strategy updates grows linearly for OI, but faster for SI. 
Only 3 (0.3\%) of the games reached a non-improving state at some point of the solving process.

Finally, for the concrete problems translated from parity games, all these games turned out to be easy to solve so that a single LP call suffices to find the optimal solution. Therefore, in Table~\ref{tab:prty}, instead of showing the number of iterations and updates, we show the solution time in seconds that provides a measure of the complexity of the game in terms of size of the linear system. Most of the games can be solved in less than one second. For the Elevator family of games, their size grows rapidly, so that we could not translate more than the first 5 instances. For the Language Inclusion games, instead, we report the games solved within 15 minutes.

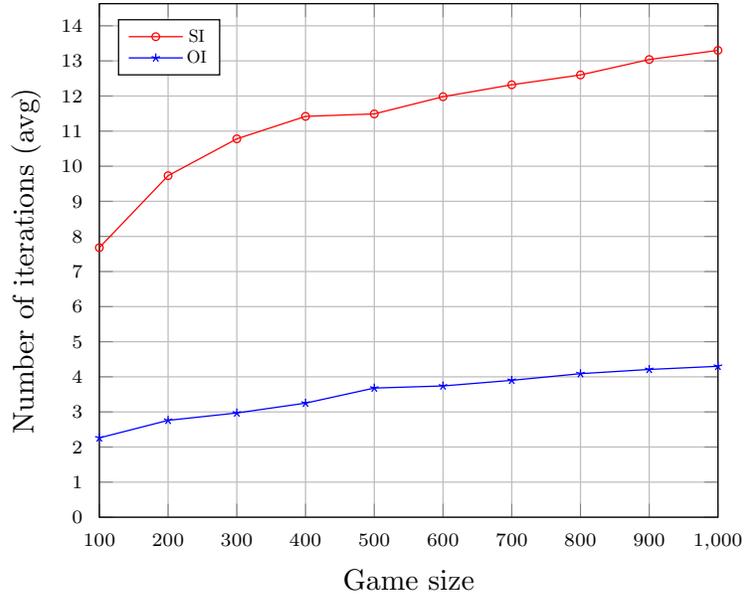
\begin{figure}
    \begin{tikzpicture}[scale=1.2,every node/.style={scale=0.85}]
      \begin{axis}[
        xmin=100, ymin=0,
        ymajorgrids=true,
        xmajorgrids=true,
        ytick={0, 1, 2, 3, 4, 5, 6, 7, 8, 9, 10, 11, 12, 13, 14},
        xtick={100, 200, 300, 400, 500, 600, 700, 800, 900, 1000},
        enlarge x limits=false,
        enlarge y limits=upper,
        mark size=1.2pt,
        ylabel={Number of iterations (avg)},
        xlabel={Game size},
        yticklabel style = {font=\tiny,xshift=-0.5ex},
        xticklabel style = {font=\tiny,yshift=-0.5ex},
        legend style={nodes={scale=0.6, transform shape}},
        legend pos=north west
        ]
        \addplot[color=red, mark=o] table [x=RUN, y=DSI_i] {resultsN.txt};
        \addlegendentry{SI}
    \addplot[color=blue, mark=star, mark size=1.4] table [x=RUN, y=OI_i] {resultsN.txt};
        \addlegendentry{OI}
        \end{axis}
    \end{tikzpicture}
    \caption{Comparisons of the average number of iterations (LP calls) for games with a linear number of successors per vertex (10\%).}
\label{fig:resmoreA}
\end{figure}

\begin{figure}
    \begin{tikzpicture}[scale=1.2,every node/.style={scale=0.85}]
      \begin{axis}[
        xmin=100, ymin=0,
        ymajorgrids=true,
        xmajorgrids=true,
        ytick={160, 500, 1000, 1500, 2000, 2500, 3000, 3500, 4000, 4500, 5000, 5500, 6000, 6500, 7000},
        xtick={100, 200, 300, 400, 500, 600, 700, 800, 900, 1000},
        y tick label style={/pgf/number format/fixed},
        enlarge x limits=false,
        enlarge y limits=upper,
        mark size=1.2pt,
        ylabel={Number of updates (avg)},
        xlabel={Game size},
        yticklabel style = {font=\tiny,xshift=-0.5ex},
        xticklabel style = {font=\tiny,yshift=-0.5ex},
        legend style={nodes={scale=0.6, transform shape}},
        legend pos=north west
        ]
        \addplot[color=red, mark=o] table [x=RUN, y=DSI_s] {resultsN.txt};
        \addlegendentry{SI}
    \addplot[color=blue, mark=star, mark size=1.4] table [x=RUN, y=OI_s] {resultsN.txt};
        \addlegendentry{OI}
        \end{axis}
    \end{tikzpicture}
    \caption{Comparisons of the average number of local strategy updates for games with a linear number of successors per vertex (10\%).}
\label{fig:resmoreB}
\end{figure}
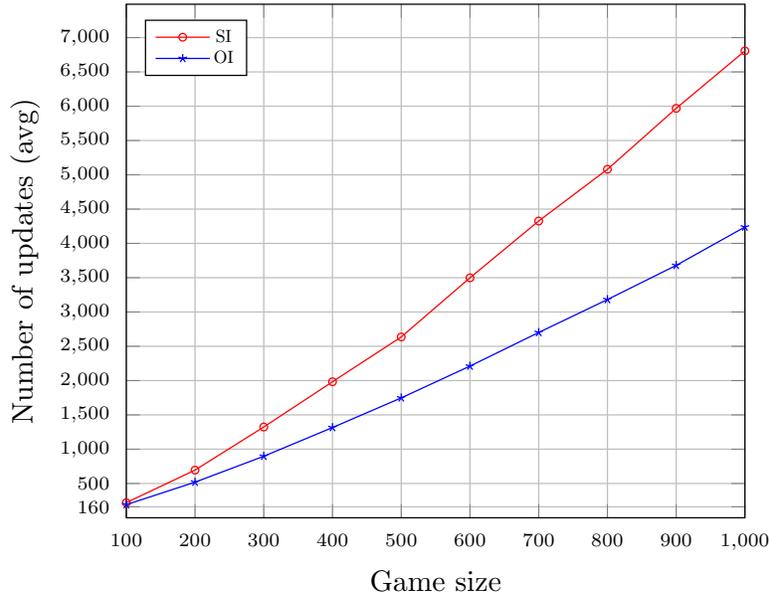

\begin{table}[htbp]
  \begin{center}
    {
      \begin{tabular}{|l|r|r||r|}
        \hline & & &\\[-.7em]
        Benchmark & Positions & Moves & Time
        \\ \hline
        \hline  & & &\\[-.7em]
        Elevator 1  & $36$  & $54$   & $0$
        \\
        Elevator 2  & $144$   & $234$   & $0$
        \\
        Elevator 3  & $564$ & $950$  & $0$
        \\
        Elevator 4  & $2688$ & $4544$  & $2$
        \\
        Elevator 5  & $15683$ & $26354$  & $16$
        \\ \hline & & &\\[-.7em]
        Language Inclusion 1  & $170$  & $1094$   & $0$
        \\
        Language Inclusion 2  & $304$  & $1222$   & $0$
        \\
        Language Inclusion 3  & $428$  & $878$   & $0$
        \\
        Language Inclusion 4  & $628$  & $1538$   & $0$
        \\
        Language Inclusion 5  & $509$  & $2126$   & $0$
        \\
        Language Inclusion 6  & $835$  & $2914$   & $0$
        \\
        Language Inclusion 7  & $1658$ & $4544$   & $1$
        \\
        Language Inclusion 8  & $14578$ & $17278$   & $47$
        \\
        Language Inclusion 9  & $25838$ & $29438$   & $279$
        \\
        \hline
      \end{tabular}
    }
    \captionsetup{type = table, justification = centering}
    \caption{\label{tab:prty} \small Experiments on concrete verification
      problems.}
  \end{center}
  \vspace{-1em}
\end{table}

\section{Discussion}\label{sec:discuss}

There is widespread belief that mean payoff and discounted payoff games have two types of algorithmic solutions: value iteration~\cite{FGO20,Koz21} and strategy improvement~\cite{Lud95,Pur95,BV07,Sch08a,STV15}.
We have added a third method, which is structurally different and opens a new class of algorithms to attack these games.
Moreover, our new symmetric approach has the same proximity to linear programming as strategy improvement algorithms, which is an indicator of efficiency.

The lack of a benchmarking framework for existing algorithms prevents us from testing and comparing an eventual implementation thoroughly, but we have compared OI against SI on random games and synthesis parity translated games in Section \ref{sec:exp}.
To keep this comparison fair (and simple), we have put as much as possible of both problems into LP calls, using greedy all switch updates in both cases.
We found that SI performs better on random games with very few (two) successors per vertex, while OI already comes out on top with few (five to ten), and shines with many (10\% of the vertices) successors.
This was a bit unexpected to us, as we thought that such a naive implementation of the new concept could not possibly compete.

Our results for random games with a low (five to ten) outdegree suggests that the number of local updates seen grows linearly with the size of the game for both SI and OI, and that the number of iterations initially grows and later plateaus.
This may well be the same for games with a minimal outdegree of two, and the data clearly supports this for SI, but the number of LP calls growth almost linearly for OI. This might be an outlier in the behaviour and it is well possible that OI plateaus later, as it is unlikely that games with a low and very low outdegree behave fundamentally different.
For SI, this was entirely expected, and it is perhaps not overly surprising for OI either.
Generally speaking, OI appears to do better than SI measured in the number of LP calls on random games, except where the outdegree is tiny.

Naturally, a fresh approach opens the door to much follow-up research.
A first target for such research is the questions on how to arrange the selection of better strategies to obtain fewer updates, either proven on benchmarks, or theoretically in worst, average, or smoothed analysis.
Our implementation effectively updates the objective function only once an optimal solution for the current objective function is found.
This is for two pragmatic reasons:
it allows us to use a solver for linear programs as a black box, and we feel that it provides the best comparison with strategy improvement.
In terms of the three update rules stated in Section \ref{ssec:mixing}, this approach used (1) basis change where available, then (2) update of the objective function through updating the joint strategy, before turning to (3) non-local updates as a last resort.

Seeing that (1) and (2) are equally cheap and simple, we could also favour (2) over (1):
where we have a solution $\val$ defined by a basis, we can first find a joint strategy $\sigma$ such that $f_\sigma(\val)$ is minimal among all strategies, before turning to (1) and (3).
This would also make sure that no corner of the polytope is ever visited twice. This could happen in principle when favouring (1) over (2), as a step that is good for one objective function might not be good for its successor, but as the value of the objective function goes down in every step, a return to a previously optimal solution is not possible.

From a theoretical perspective, it would in particular be interesting to establish non-trivial upper or lower bounds for various pivoting rules. Without such a study, a trivial bound for the proposed approach is provided by the number of strategies (exponential).

A second question is whether this method as whole can be turned into an inner point method \cite{Kar84}.
If so, this could be a first step towards showing tractability of discounted payoff games -- which would immediately extend to mean-payoff and parity games.

\subsection*{Acknowledgements}

\includegraphics[height=8pt]{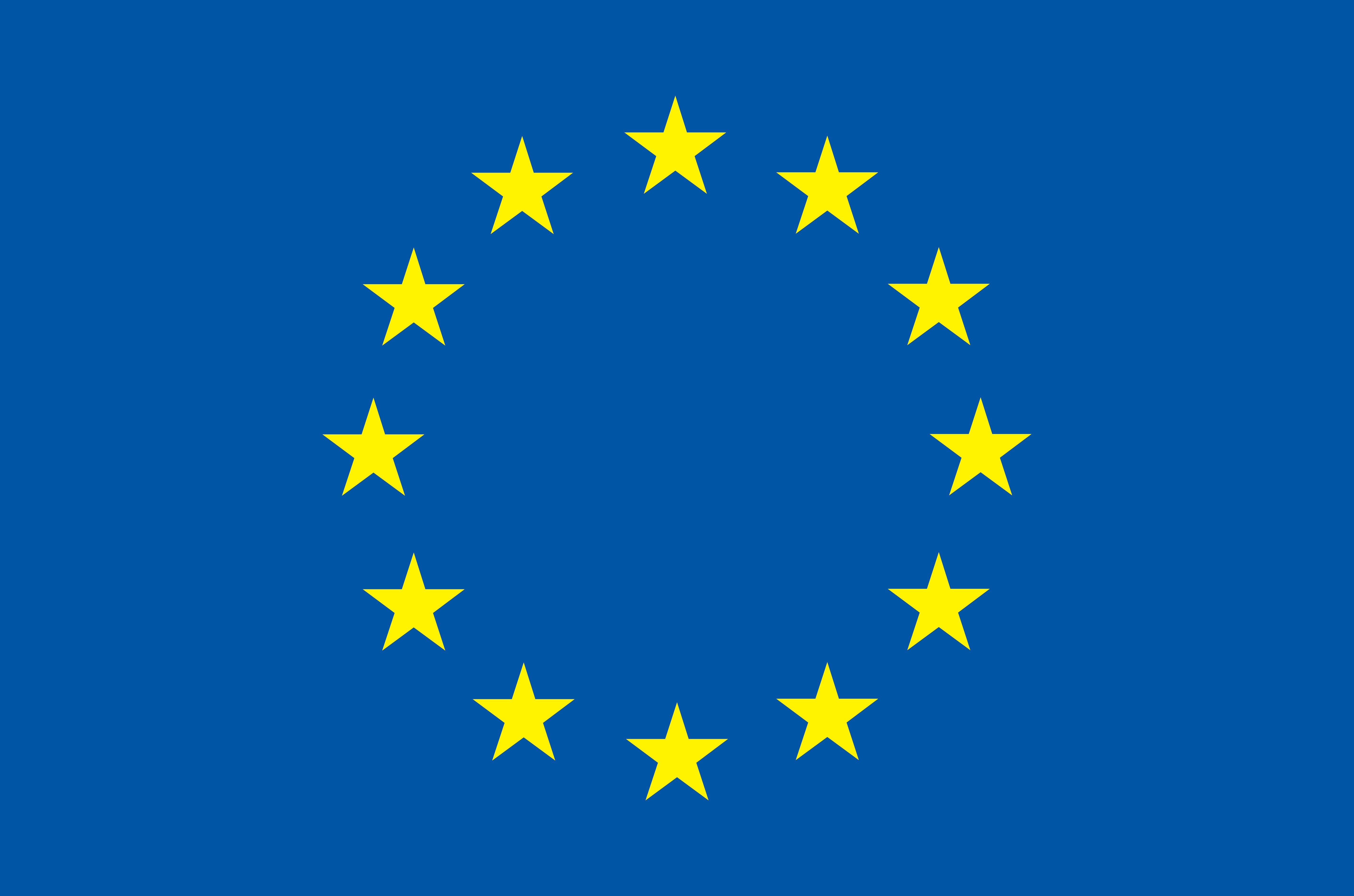} This project has received funding from the European Union’s Horizon 2020 research and innovation programme under the Marie Sk\l odowska-Curie grant agreement No 101032464.
It was supported by the EPSRC through the projects EP/X017796/1 (Below the Branches of Universal Trees) and
EP/X03688X/1 (TRUSTED: SecuriTy SummaRies for SecUre SofTwarE Development). We also thank Peter Austin from the University of Liverpool for providing an integration of the LP solver.
We thank an anonymous reviewer for pointing out that all sharp games are improving, which simplifies our methods compared to the conference version and thus improves the results.

\bibliographystyle{alphaurl}
\bibliography{bib}

\end{document}